\newcommand{\corr}{\mathrm{corr}}
\newcommand\av[1]{\langle #1 \rangle}
\newcommand\dd{{\mathrm{d}}}
\newcommand{\cov}{{\rm cov}} 
\newcommand{\boundary}{I}
\newcommand{\bra}[1]{\langle #1|}
\newcommand{\ket}[1]{|#1\rangle}
\newcommand{\proj}[1]{\vert #1\rangle\!\langle#1 \vert}
\newcommand{\norm}[1]{\| #1 \|}
\newcommand{\norminf}[1]{\| #1 \|_\infty}
\newtheorem{lemma}{Lemma}
\def\@mkboth#1#2{}
\newlength\appendixwidth
\preto\appendix{\addtocontents{toc}{\protect\patchl@section}}
\newcommand{\patchl@section}{%
  \settowidth{\appendixwidth}{\textbf{Appendix }}%
  \addtolength{\appendixwidth}{1.5em}%
  \patchcmd{\l@section}{1.5em}{\appendixwidth}{}{\ddt}%
}
\begin{document}

\title{
Locality of temperature in spin chains
}  

\author{Senaida Hern\'andez-Santana$^1$, Arnau Riera$^1$, Karen V. Hovhannisyan$^1$, Mart\'i Perarnau-Llobet$^1$, Luca Tagliacozzo$^1$ and Antonio Ac\'in$^{1,2}$} 

\address{$^1$ ICFO-Institut de Ciencies Fotoniques, Mediterranean Technology Park, 08860 Castelldefels (Barcelona), Spain}

\address{$^2$ ICREA-Instituci\'o Catalana de Recerca i Estudis Avan\c{c}ats, Lluis Companys 23, 08010 Barcelona, Spain}

\begin{abstract}
In traditional thermodynamics, temperature is a local quantity: a subsystem of a large thermal system is in a thermal state at the same temperature as the original system. For strongly interacting systems, however, the locality of temperature breaks down. We study the possibility of associating an effective thermal state to subsystems of infinite chains of interacting spin particles of arbitrary finite dimension. We study the effect of correlations and criticality in the definition of this effective thermal state and discuss the possible implications for the classical simulation of thermal quantum systems.
\end{abstract}

\maketitle

\tableofcontents

\section{Introduction}

The question whether the standard notions of thermodynamics are still applicable in the quantum regime has experienced a renewed interest in the recent years. This refreshed motivation can be explained as the consequence of two successes. On the one hand, the spectacular progress of the experiments encompassed in the so called \emph{quantum simulators} already allows for a direct observation of thermodynamic phenomena in many different quantum systems, such as ultra cold atoms in optical lattices, ion traps, superconductor qubits, etc \cite{bloch,polkan,plenio,greiner}. On the other hand, the inflow of ideas from quantum information theory provided significant insight into the thermodynamics of quantum systems \cite{vedral,popka,oppen,lidia}. Specifically, qualitative improvements have been made in understanding how the methods of statistical mechanics can be justified from quantum mechanics as its underlying theory \cite{polkan,popka,tumulka,Gemmer,short,brandao}.

One of the fundamental postulates of thermodynamics is the so called Zeroth Law: two bodies, each in thermodynamic equilibrium with a third system, are in equilibrium with each other \cite{reichl,balian}. This is the law that stands behind the notion of temperature \cite{reichl,balian}. In fact, the above formulation of the Zeroth law consists of three parts: (i) there exists a thermal equilibrium state which is characterized by a single parameter called temperature, and isolated systems tend to this state \cite{polkan,popka,tumulka,Gemmer}; (ii) the temperature is local, namely, each part of the whole is in a thermal state \cite{reichl,balian}; and (iii) the temperature is an intensive quantity: if the whole is in equilibrium, all the parts have the same temperature \cite{reichl,balian,ferraro,artur,Kliesch:2014,pachon}. The last two points are usually derived from statistical mechanics under the assumption of weakly interacting systems. Nevertheless, when the interactions present in the system are non-negligible, the points (ii) and (iii) need to be revised. Following the direction given by Refs.~\cite{ferraro,artur}, in this work we concentrate on the clarification and generalization of the aforementioned aspects of the Zeroth Law of the thermodynamics for spin chains with strong, short range interactions. 

The general setting of the problem is as follows. The system with Hamiltonian $H$ is in thermal equilibrium, described by a canonical state at inverse temperature $\beta$,
\begin{equation}\label{eq:thermal-state}
\omega(H) =\frac{ \e^{-\beta H} }{ \Tr\left(\e^{-\beta H}\right)} \,,
\end{equation}
and we seek to understand the thermal properties of a finite part of the system.

Obviously, in the presence of strong interactions, the reduced density matrix of a subsystem of the global system (especially in the quantum regime \cite{pachon}) will not generally have the same form as (\ref{eq:thermal-state}). In lattice systems, where the Hamiltonian is a sum of local terms interacting according to some underlying graph, it is unclear how one can  locally assign temperature to a subsystem. More precisely, the reduced density matrix of the subsystem $A$ (see Fig.~\ref{fig:BoundaryRegion}) of a global thermal state is described by
\begin{equation}
\rho_A=\Tr_{\bar A}(\omega(H))\, ,
\end{equation}
which will not be thermal unless the particles in $A$ do not interact with its environment $\bar A$. Hence, given only a subsystem state $\rho_A$ and its Hamiltonian $H_A$, it is not possible to assign a temperature to it, since this would totally depend on the features of the environment and the interactions that couple the subsystem to it. 

In the context of quantum information, a first step to circumvent the problem of assigning temperature to a subsystem was made in Ref.~\cite{ferraro}. There, for harmonic lattices it was shown that it is sufficient to extend the subsystem $A$ by a boundary region $B$ that, when traced out, disregards the correlations and the boundary effects (see Fig.~\ref{fig:BoundaryRegion}). If the size of such a boundary region is independent of the total system size, temperature can still be said to be \emph{local}.

More explicitly, given a lattice Hamiltonian $H$ with a subsystem $A$, a shell region around it $B$ and its environment $C=(A\cup B)^c$, see Fig.~\ref{fig:BoundaryRegion}, we aim to understand how the expectation values of operators that act non-trivially only on $A$ for the global thermal state $\omega(H)$ differ from those taken for the thermal state of the truncated Hamiltonian $H_{AB}$ (with $AB=A \cup B$) as the width $\ell_B$ of the boundary region $B$ increases:
\begin{equation} \label{dist1}
\Tr\bigl[O\, \rho_A'\bigr] - \Tr\bigl[O\,\rho_A\bigr]\le ||O||_\infty f(\ell_B)\, ,
\end{equation}
where $\rho_A'=\Tr_B\omega(H_{AB})$ is the state of $A$ for the chain truncated to $AB$, and $f(\ell_B)$ is expected to be a monotonically decreasing function in $\ell_B$. The width of the boundary region $\ell_B$ is defined as the graph-distance between the sets of vertices (regions) $A$ and $C$.

Surely, the differences (\ref{dist1}) fully characterize the distance of $\rho'_A$ from $\rho_A$. Indeed, the trace distance, a distinguishability measure for quantum states, $D(\rho'_A,\rho_A)=\frac{1}{2}||\rho'_A-\rho_A||_1$, has the following representation \cite{mikeike,FvdG}:
\begin{equation} \label{dist2}
D(\rho'_A,\rho_A)=\max_{0<O<I}\Tr\bigl[ O(\rho'_A-\rho_A) \bigr]\leq f(\ell_B),
\end{equation}
where $I$ is the identity operator in the Hilbert space of $A$.

\begin{figure}
\begin{center}
\includegraphics[scale=.65]{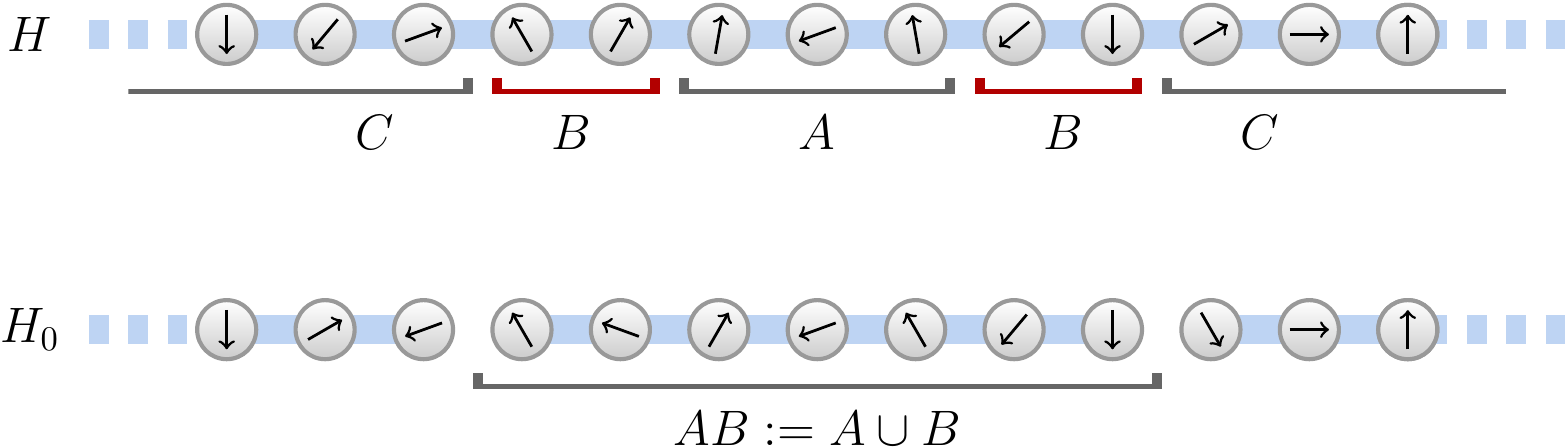}
\end{center}
\caption{
Scheme of the subsystem of interest $A$, the boundary region $B$, 
and their environment $C$ for a spin chain. Expectation values on 
$A$ for the thermal state of the full Hamiltonian $H$ (above) are 
expected to be approximated by expectation values for thermal state 
of the truncated Hamiltonian $H_0$ (below) if the boundary region 
$B$ is sufficiently large.
}
\label{fig:BoundaryRegion}
\end{figure}

In Ref.~\cite{Kliesch:2014}, it is proven that the correlations responsible for the distinguishability between the truncated and non-truncated thermal states are quantified by a \emph{generalized covariance}. For any two operators $O$ and $O'$, full-rank quantum state $\rho$, and parameter $\tau \in [0,1]$, the generalized covariance is defined as
\begin{equation}\label{eq:def_cov}
 \cov_\rho^\tau(O,O') 
 := \Tr\left(\rho^\tau O\, \rho^{1-\tau} O'\right) - \Tr(\rho\, O) \Tr(\rho \, O') \, ,
\end{equation}
and the average distinguishability of the two states measurements of some observable $O$ can provide reads as
\begin{equation}\label{eq:truncation_error_in_terms_of_cov}
\Tr\bigl[O \omega(H_0)\bigr] - \Tr\bigl[O\,\omega(H)\bigr]
= 2\int_0^1 \dd s\int_0^{\beta/2} \dd t \, \cov_{\omega_s}^{t/\beta} (H_{\boundary}, O),  
\end{equation}
where  $H_{\boundary}$ is the corresponding Hamiltonian term that couples $B$ and $C$, $H_0=H-H_I$ is the truncated Hamiltonian (see Fig.~\ref{fig:BoundaryRegion}) and $\omega_s=\omega(H(s))$ is the thermal state of the \emph{interpolating Hamiltonian} $H(s) := H - (1-s)\, H_{\boundary}$. Hence, \emph{the generalized covariance is the quantity that measures the response in a local operator of perturbing a thermal state and ultimately at what length scales temperature can be defined}.

Temperature is known to be a local quantity in a high temperature regime. More specifically, in Ref.~\cite{Kliesch:2014}, it is shown that for any local Hamiltonian there is a threshold temperature (that only depends on the connectivity of the underlying graph) above which the generalized covariance decays exponentially. Nevertheless, it is far from clear what occurs below the threshold, and, especially, at low temperatures ($\beta \gg 1$). Note that, in that case, the right hand side of the truncation formula (\ref{eq:truncation_error_in_terms_of_cov}) could be significantly different from zero since the integration runs up until $\beta/2$, while the covariance is expected to decay only algebraically for critical systems.

In this work we show that, for one dimensional translation-invariant systems, temperature is local for any $\beta$. Away from criticality, we rigorously bound the truncation formula (\ref{eq:truncation_error_in_terms_of_cov}) by mapping the generalized covariance to the contraction of a tensor network and exploiting some standard results in condensed matter. At criticality, we use some results from \emph{conformal field theory} \cite{DiFrancesco:1997,Cardy:2008}. Finally, the results in \cite{lluis}, where the equivalence of microcanonical and canonical ensembles is proven for translation-invariant lattices with short range interactions, render our results valid also when, instead of being canonical, (\ref{eq:thermal-state}), the global state $\omega(H)$ is, e.g., microcanonical. The latter is defined as an equiprobable mixture of all the energy eigenstates in a narrow energy window (see \cite{lluis} for details).

In condensed matter physics, this problem has been considered in the context of approximating the expectation values of infinite systems by finite ones, receiving the name of \emph{finite size scaling} (see, e.g., \cite{fss} and references therein). Nevertheless, the finite-size-scaling methods are more focused to find the values for the critical exponents and the transition temperature by observing how measured quantities vary for different lattice sizes.

The paper is organized as follows. In Sec.~2, the generalized covariance of 1D system is mapped to the contraction of a 2D tensor network. In Sec.~3, we show that temperature is local at \emph{non-zero temperature} ($\beta < \infty$) by identifying in the tensor network a gapped transfer matrix which leads to a clustering of correlations and ultimately to a clustering of the generalized covariance. In Sec.~4, locality of temperature is proven at \emph{zero temperature} ($\beta \to \infty$) using different methods for gapped and gapless systems. While transfer matrix arguments work satisfactory for gapped systems, conformal field theory results have to be used at criticality. In Sec.~5, all our results are illustrated in detail for the Ising model, for which we study in addition the behaviour of the generalized covariance and compute explicitly the physical distinguishability between the full and truncated Hamiltonians both at and off criticality. Finally, we conclude.

\section{Tensor network representation of the generalized covariance}
\subsection{Mapping the partition function of a $D$-dimensional quantum model to the contraction
of tensor network of $D+1$ dimensions}

Let us consider a system of spins described by a short range Hamiltonian.
The structure of the Hamiltonian is given by a graph $G(V,E)$. The spins correspond to the set of vertices $V$ and the
two-body interactions to the edges $E$. Such a Hamiltonian can be written as
\begin{equation}\label{eq:def_localH}
H=\sum_{u\in E} h_u
\end{equation}
where $h_u$ are the Hamiltonian terms acting non-trivially on the adjacent vertices of $u$.

\begin{figure}
\centering
\includegraphics[scale=0.7]{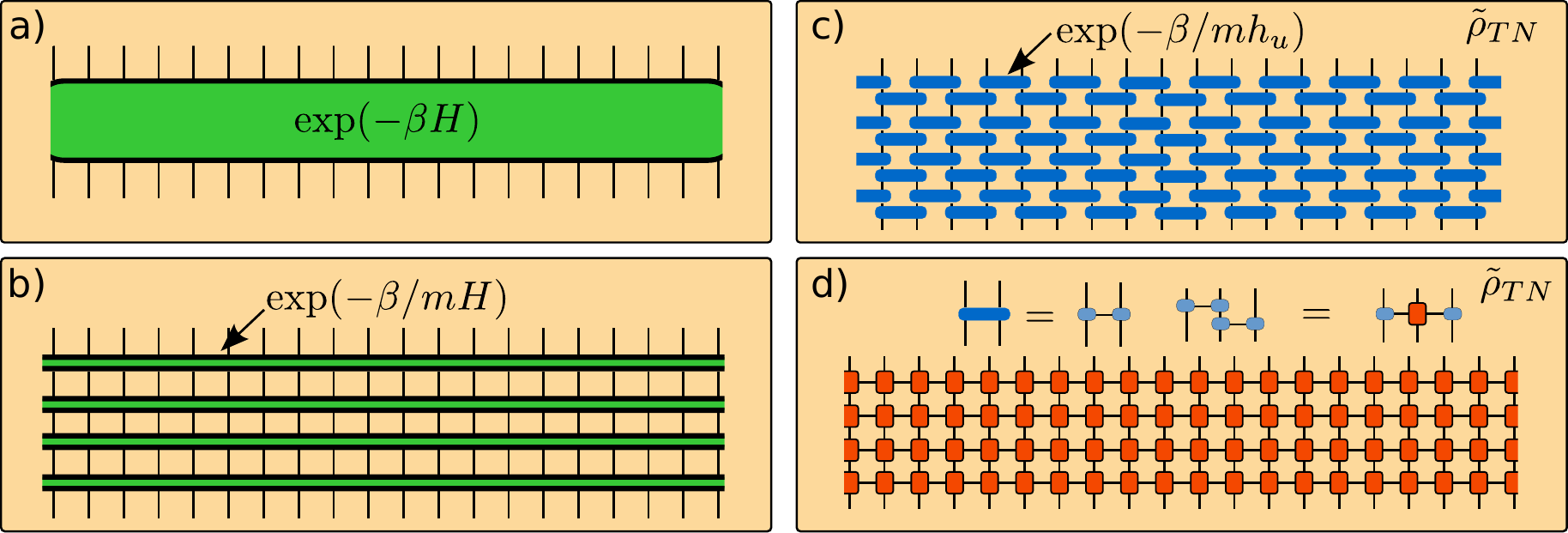}
\caption{Diagrammatic representation of
(a) the operator $\e^{-\beta H}$ with $H$ the Hamiltonian of a spin chain, 
(b) its decomposition $\left(\exp(-\beta/m H)\right)^m$,
(c) the tensor network that approximates $\e^{-\beta H}$ after performing the Trotter-Suzuki decomposition
 for a one dimensional short-ranged Hamiltonian
and (d) the same tensor network after a convenient arrangement of the tensors.
We use the Penrose notation: tensors are represented as geometric shapes, open legs represent their indices and legs connecting different tensors encode their contraction over the corresponding indices.
}
\label{fig:Trotter}
\end{figure}

In Ref.~\cite{hastings,Molnar:2014}, it is shown that, for any error $\varepsilon >0$, 
the matrix $\e^{-\beta H}$ of a local Hamiltonian can be approximated in one norm by
its Trotter-Suzuki expansion, 
\begin{equation}\label{eq:Trotter-expansion}
\tilde \rho_{TN}=\left(\left(\prod_{u\in E} \e^{-\frac{\beta}{2m} h_u }\right)\left(\prod_{v\in E} \e^{-\frac{\beta}{2m} h_v }\right)^\dagger\right)^m\, ,
\end{equation}
such that
\begin{equation}\label{eq:Trotter-approx}
\norm{\e^{-\beta H}-\tilde \rho_{TN}}_1 \le \varepsilon \norm{\e^{-\beta H}}_1\, ,
\end{equation}
where $m>360\beta^2|E|^2/\varepsilon$ and the products over $u$ and $v$ in Eq.~(\ref{eq:Trotter-expansion}) are realized in the same order.

To illustrate the previous approximation, let us consider in detail the one dimensional case: a spin chain with nearest neighbour interactions.
By decomposing in the standard way the Hamiltonian in its odd and even terms,
the tensor network $\tilde \rho_{TN}$ becomes in this case
\begin{equation}
\tilde \rho_{TN}=\left(\e^{-\frac{\beta}{2m} H_\textrm{odd} }\e^{-\frac{\beta}{m} H_\textrm{even} }
\e^{-\frac{\beta}{2m} H_\textrm{odd} } \right)^m\, ,
\end{equation}
where $H_\textrm{odd/even} = \sum_{u\in \textrm{odd/even}} h_u$ and $H=H_\textrm{odd}+H_\textrm{even}$.

Let us think about each $\exp(\beta/m h_u)$ as a tensor.
In this way, $\tilde \rho_{TN}$ can be seen as the contraction
of several of such tensors, that is, a \emph{tensor network}
see Fig.~\ref{fig:Trotter} (c).
Starting from a one dimensional quantum system,
$\tilde \rho_{TN}$ can be interpreted as a \emph{tensor network} spanning two dimensions, with the extra dimension of length $m$.
We will refer to this extra dimension as 
the $\beta$ direction, while the original dimension will be called spatial direction.

In Fig.~\ref{fig:Trotter} (a), the diagrammatic representation of $\tilde \rho_{TN}$ is presented.
Its tensors can be decomposed and arranged in order to form a square lattice of elementary tensors as 
shown in Fig.~\ref{fig:Trotter} (b).

\subsection{Generalized covariance as the contraction of tensor networks}
The expectation value of a local operator is given by
\begin{equation}
\langle O \rangle=\Tr(O \omega(H))=\frac{\Tr(O \e^{-\beta H})}{\Tr(\e^{-\beta H})}\, .
\end{equation}
By using Eq.~(\ref{eq:Trotter-approx}), the fact that $\norm{\e^{-\beta H}}_1=Z$ and some elementary algebra, the expectation
value of a local operator $A$ can be approximated by the ratio between the contraction of 
two tensor networks
\begin{equation}
\left|\langle O \rangle -\frac{\Tr(O \tilde \rho_{TN})}{\Tr( \tilde \rho_{TN})} \right|\le 2 \norm{O}_\infty \varepsilon\, .
\end{equation}
This is represented diagramatically in Fig.~\ref{fig:expectation-values}(a).

\begin{figure}
\centering
\includegraphics[scale=0.8]{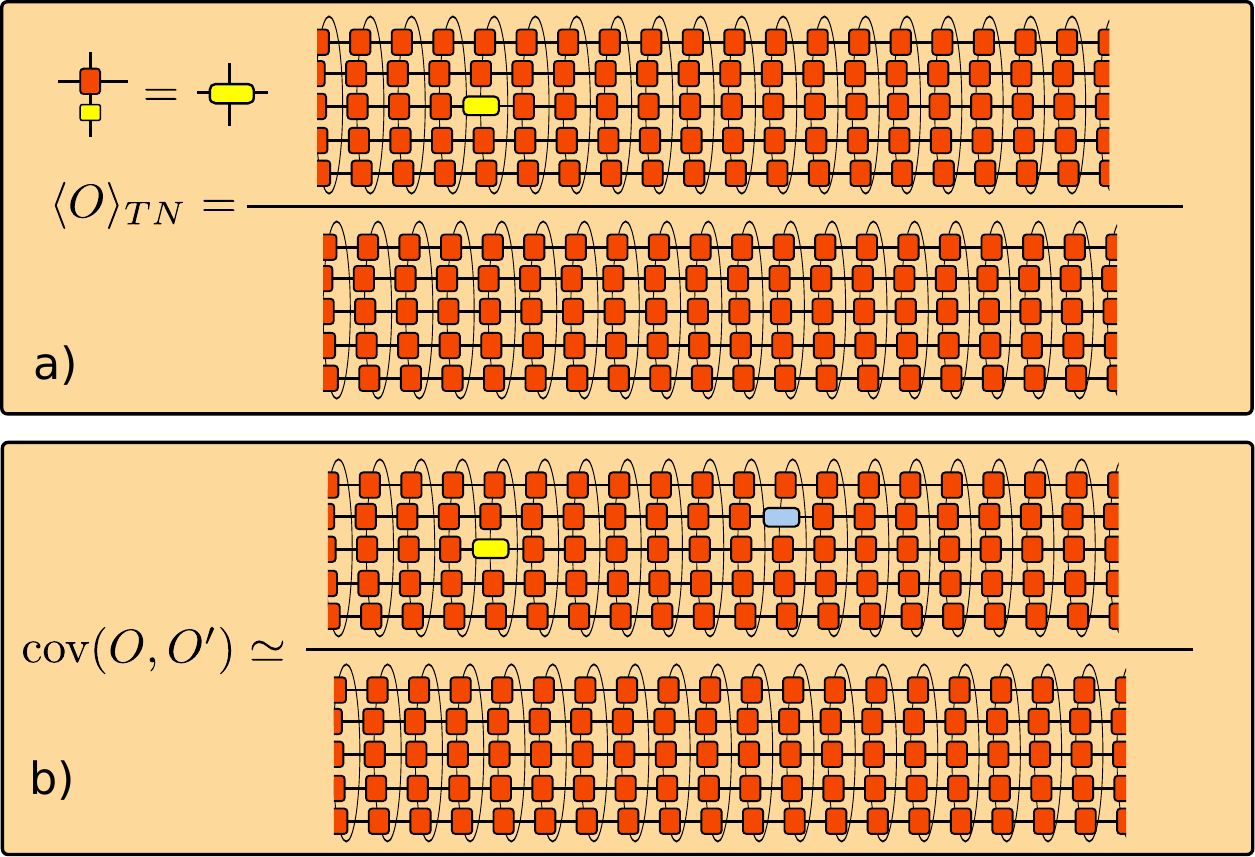}
\caption{Diagramatic representation of (a) the expectation value of a one site operator 
and (b) the generalized covariance (a two-point correlation function) between two one site operators.
In both cases, the final result is computed as the ratio between the contraction of two tensor networks.}
\label{fig:expectation-values}
\end{figure}

The generalized covariance can be rewritten as
\begin{equation}
\cov_{\omega_s(\beta)}^{\tau} (O,O') = 
\frac{\Tr \left( \tilde O \e^{-\tau \beta H} \tilde O' \e^{-(1-\tau)\beta H} \right)}{\Tr \left( \e^{-\beta H}\right)}\, ,
\end{equation}
where $\tilde O=O- \Tr(O\omega(H))$ for any operator $O$.
Hence, in a similar way as it has been made for the expectation values, the generalized covariance 
can be also approximated as the ratio between two tensor network
contractions
as shown in Fig.~\ref{fig:expectation-values}(b)
\begin{equation}
\left|\cov_{\omega_s(\beta)}^{\tau} (O,O') 
-\frac{\Tr(\tilde O \tilde \rho_{TN}^\tau \tilde O' \rho_{TN}^{(1-\tau)})}{\Tr( \tilde \rho_{TN})}
\right|\le 2 \norm{O}_\infty \norm{O'}_\infty \varepsilon \, .
\end{equation}
From this perspective, the generalized covariance can be seen as a two point correlation function on
a 2 dimensional lattice in which $\tau m$ is the separation in the $\beta$ direction
and the distance between the non-trivial supports of $O$ and $O'$ is the separation in the spatial direction
(see Fig.~\ref{fig:expectation-values}).

This construction can be generalized to approximate expectation values of local operators
and $n$-point correlation functions of a $D$ dimensional quantum model by
the ratio of the contraction of two $D+1$ dimensional tensor networks. 

\subsection{Transfer matrices}
It is also very useful to define two extra objects: the \emph{transfer matrices} along the spatial $T$ and $\beta$ directions $T_\beta$.
The first is obtained by contracting a column of the elementary tensors of the network, while the second is obtained by contracting several rows of elementary tensors, see Fig.~\ref{fig:spatial-transfer-matrix}.

The number of rows that need to be contracted in order to obtain the transfer matrix in the $\beta$ direction, $T_\beta$, is chosen 
such that its spectral gap between the largest and second-largest eigenvalues is independent of both $\beta$ and $m$. 
This can be achieved by contracting 
 $m/\beta$ rows, leading to a transfer matrix with two largest eigenvalues $\mu_1$ and $\mu_2$
\begin{equation}
\frac{\mu_2}{\mu_1}=\e^{-\Delta}
\end{equation}
where $\Delta$ is the gap of the Hamiltonian.

\begin{figure}
\centering
\includegraphics[scale=0.9]{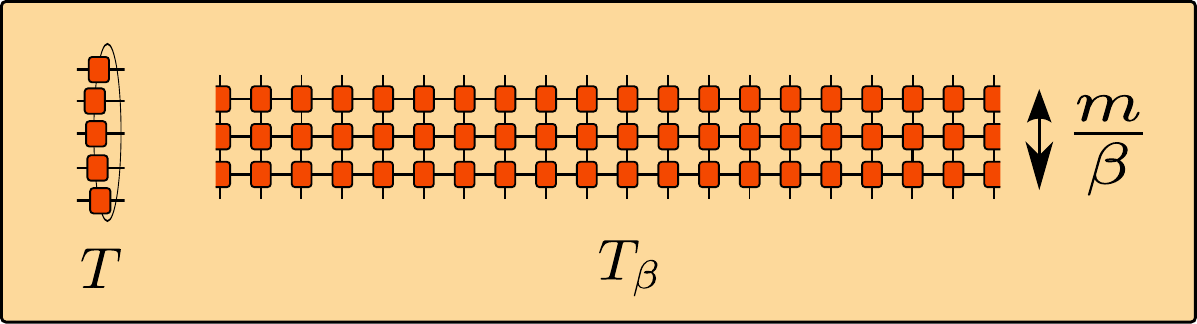}
\caption{Diagramatic representation of the transfer matrix in the spatial direction (left) and the $\beta$ direction (right).}
\label{fig:spatial-transfer-matrix}
\end{figure}

\section{Locality of temperature at non-zero temperature}\label{section:finite-temperature}

Let us consider now the case in which $\beta$ is of order one.
The physical distinguishability in $A$ between the full and the truncated Hamiltonians
can be bounded by 
\begin{equation}\label{eq:trivial-bound}
\hspace{-2.5cm} \Tr\bigl[O \omega(H_0)\bigr] 
	 - \Tr\bigl[O\,\omega(H)\bigr]
    \hspace{-0.1cm}\leq\hspace{-0.1cm} \beta\hspace{-0.1cm} \int\limits_{0}^1 \hspace{-0.1cm}\dd s \max_{\tau\in[0,1]} 
    \cov_{\omega_s}^{\tau} (H_{\boundary}, O)
    \hspace{-0.1cm}\leq\hspace{-0.1cm} 2\beta \hspace{-0.1cm}\int\limits_{0}^1 \hspace{-0.1cm}\dd s  \max_{i} \max_{\tau\in[0,1]}\cov_{\omega_s}^{\tau} (h_i, O),
\end{equation}
where  $H_I=h_L+h_R$, $i\in\{L,R\}$, and we have used the linearity of the generalized covariance with respect of its operators.

Without loss of generality let us assume that the term of $H_I$ that maximizes the generalized covariance is the one of the left, $h_L$. Hence, the quantity to bound is $\cov_{\omega_s}^{\tau} (h_L, O)$.
In order to do so, let us rewrite it as
\begin{equation}\label{eq:3pointfunction}
\hspace{-2.3cm}\cov_{\omega_s}^{\tau} (h_L, O)\hspace{-0.1cm}=\hspace{-0.1cm}
\frac{\bra{1_L} X_s T^{\ell_B} Y T^{\ell_B} T_s \ket{1_R}}{Z_s} 
-\frac{\bra{1_L} T_s T^{\ell_B} Y T^{\ell_B} T_s\ket{1_R}}{Z_s}
\frac{\bra{1_L} X_s T^{2\ell_B+1} T_s \ket{1_R}}{Z_s}~
\end{equation}
where $Z_s:=\bra{1_L}T_s T^{2\ell_B+1} T_s\ket{1_R}$ is the partition function, 
$T$ is the transfer matrix in the spatial direction,
see Fig.~\ref{fig:spatial-transfer-matrix} (left),
and $\ket{1_{L/R}}$ is the left/right dominant eigenvector of $T$ i.~e.\ the eigenvector associated
to its largest eigenvalue. 
The matrix $T_s$ is the transfer matrix corresponding to the boundaries $BC$
where the elementary tensors of the network are different from the rest for $s< 1$ and $T_{1}=T$.
The matrix $Y$ corresponds to the slice of the region $A$
where the operator $O$ is supported, 
and the matrix $X_s$ is the transfer matrix $T_s$ with the
insertion of the operators $h_L$ located at a distance $\tau \beta$ from $O$ in the transverse direction.
The diagrammatic representations of the matrices $X_s$, $Y$ and $T_s$ are shown in Fig.~\ref{fig:gapped-systems}.
To simplify the calculations, the transfer matrix can always be normalized such that its dominant
eigenvalue is $\lambda_1=1$.

\begin{figure}
\centering
\includegraphics[scale=.9]{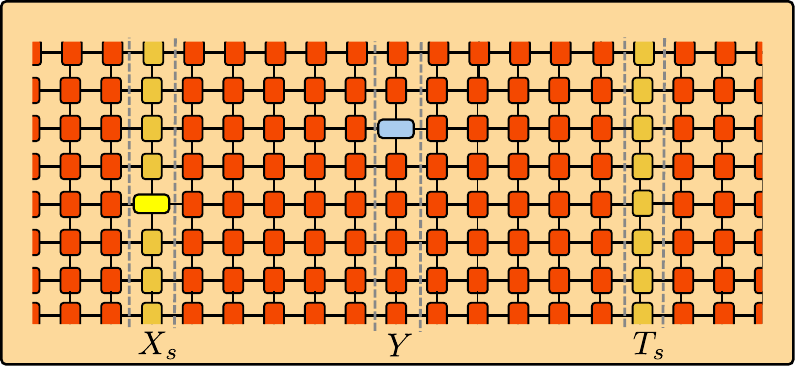}
\caption{Diagrammatic representation of a 3-point correlation function of the matrices $X_s$, $Y$ and $T_s$ for a system at zero temperature. The network is infinite in both directions.}
\label{fig:gapped-systems}
\end{figure}

To bound the generalized covariance (\ref{eq:3pointfunction}) it is useful to 
rewrite it in terms of 2-point correlation functions of the uniform system ($s=1$)
\begin{eqnarray}\label{eq:GeCov-as-2PCF}
\cov_{\omega_s}^{\tau} (h_L, O)&=&
\frac{\cov_T(\ell_B;X_s,YT^{\ell_B}T_s)}{Z_s}\nonumber \\
&-&\frac{\av{X_s}_T\av{T_s}_T}{Z_s}\frac{\cov_T(\ell_B;T_s,YT^{\ell_B}T_s)}{Z_s} \nonumber \\
&+&\frac{\av{X_s}_T\av{YT^{\ell_B} T_s}_T}{Z_s}\frac{\cov_T(2\ell_B+1;T_s,T_s)}{Z_s} \\
&-&\frac{\av{T_s}_T\av{YT^{\ell_B} T_s}_T}{Z_s}\frac{\cov_T(2\ell_B+1;X_s,T_s)}{Z_s} \nonumber \\
&-&\frac{\cov_T(\ell_B;T_s,YT^{\ell_B}T_s)}{Z_s}\frac{\cov_T(2\ell_B+1;X_s,T_s)}{Z_s}\, ,\nonumber 
\end{eqnarray}
where
\begin{equation}\label{eq:expectation-value-uniform}
\av{X}_T:=\bra{1_L} X \ket{1_R},
\end{equation}
\begin{equation}\label{eq:2pointfunction-uniform}
\cov_T (\ell; X, Y):=\bra{1_L} X T^{\ell}Y \ket{1_R}-
\bra{1_L} X \ket{1_R}\bra{1_L} Y \ket{1_R}\, ,
\end{equation}
and we have used that $\bra{1_L} T^{\ell_B+1} T_s\ket{1_R}=\av{T_s}_T$ and $\cov_T(\ell_B;X_s,T^{\ell_B+1}T_s)=\cov_T(2\ell_B+1;X_s,T_s)$.

In short range one dimensional systems, \emph{the absence of phase transitions at non-zero temperature} \cite{Gelfert2001}
\emph{implies that the transfer matrix $T$ is gapped, with a gap related to 
the spatial correlation length} as
\begin{equation}\label{eq:corr_length}
\xi =-\left( \ln|\lambda_2| \right)^{-1}>0\, ,
\end{equation}
where $\lambda_{2}$ is the second largest eigenvalue of the transfer matrix $T$.

For gapped transfer matrices, the 2-point correlation function (\ref{eq:2pointfunction-uniform}) can be proven to be upper-bounded by
\begin{equation}
|\cov_T(\ell;X,Y)|\le \norm{X \ket{1_L}} \norm{Y\ket{1_R}} \e^{-\ell/\xi}\, .
\end{equation}
The complete proof of the previous statement can be found in
Lemma~\ref{lemma:covarianceIS} of the \ref{appa2}.

Furthermore, Lemma~\ref{lemma:covarianceIS} allows us to bound all the terms in Eq.~(\ref{eq:GeCov-as-2PCF}), and, as it is shown in \ref{appfin}, the following inequality holds for the left hand side of Eq.~(\ref{eq:trivial-bound}):
\begin{equation}
    \Tr \left[O\,\omega(H_{AB})\right] 
	 - \Tr\left[O\,\omega(H)\right]
    \le 2 \beta c \norminf{h_L}\norminf{O} e^{-\ell_B /\xi} +\mathcal{O}(\e^{-2\ell_B / \xi}), 
\end{equation}
where $c=1+\int_0^1 \dd s \sigma_L(s)$, with $\sigma_{L}(s)=(\bra{1_{L}}T_sT_s^\dagger\ket{1_{L}}-|\av{T_s}_T|^2)^{1/2}/|\av{T_s}_T|$ being the relative standard deviation of $T_s$ on the left dominant eigenstate $\ket{1_L}$. The quantity $c$ is a constant of order one that depends on the model considered. Hence, the temperature is proven to be intensive for any one dimensional translationally invariant model at non-zero $\beta$.

\section{Absolute zero temperature}
\subsection{Gapped systems} \label{pupulik}
Given a Hamiltonian with gap $\Delta$, here we study the regime in which $\beta^{-1} \ll \Delta$. 
This implies that the lattice in its $\beta$ direction is much larger than the correlation length
$\beta \gg \xi_\beta$, with
\begin{equation}
\xi_\beta:=\left(\ln\left(\frac{\mu_1}{\mu_2}\right)\right)^{-1}=\Delta^{-1} \, .
\end{equation}
In the limit of temperature tending to zero, the 2D network that represents the partition function becomes infinite in the
$\beta$ direction (see Fig.~\ref{fig:gapped-systems}).

In order to see that the temperature is also local in this case,
let us decompose the integral over $t$ of the generalized covariance into two pieces
\begin{eqnarray}
\hspace{-1.7cm}\Tr\bigl[O\,\omega(H_{AB})\bigr] 
	 &-& \Tr\bigl[O\,\omega(H)\bigr]=
2\int_0^1 \dd s\int_0^{\beta/2} \dd t \, \cov_{\omega_s}^{t/\beta} (H_{\boundary}, O)\nonumber \\
&=&
2\int_0^1 \dd s\int_0^L \dd t \, \cov_{\omega_s}^{t/\beta} (H_{\boundary}, O)+
2\int_0^1 \dd s\int_L^{\beta/2} \dd t \, \cov_{\omega_s}^{t/\beta} (H_{\boundary}, O)
\end{eqnarray}
where $L$ is a cut-off that will be chosen afterwards to minimize a bound on the right hand side,
and $\beta$ will be made to tend to infinity.

Concerning the integral over $0\le t\le L$, we will exploit the fact that
the system is gapped, and hence \emph{its ground state is known to have a finite correlation length} $\xi$ in the spatial direction and to be \emph{represented by a Matrix Product State} of bond dimenson $D$,
with $D\propto \textrm{poly}(\xi)$ \cite{Perez-Garcia2007,Pirvu2012a,Verstraete2006}.
As argued in the previous section, a finite correlation length guarantees a gap in the transfer
matrix in the corresponding direction.
By performing an analogous calculation to the one described in the previous section, one obtains
\begin{equation}
\int_0^1 \dd s\int_0^L \dd t \, \cov_{\omega_s}^{t/\beta} (H_{\boundary}, O)\le 
2c\norminf{h}L \norminf{O}e^{-\ell_B /\xi} \, .
\end{equation}

The second integral over $t>L$ can be bounded by
taking the transfer matrix in the $\beta$ direction
which is also gapped for gapped Hamiltonians.
More specifically, the generalized covariance can be written as
\begin{equation}
\cov_{\omega_s}^{t/\beta} (H_{\boundary}, O)=\frac{\bra{GS} O T_\beta^{t} H_\boundary \ket{GS}}{\bra{GS} T_\beta^{t} \ket{GS}}-
\frac{\bra{GS} O T_\beta^{t} \ket{GS}}{\bra{GS} T_\beta^{t} \ket{GS}}\frac{\bra{GS} T_\beta^{t} H_\boundary \ket{GS}}{\bra{GS} T_\beta^{t} \ket{GS}}\, .
\end{equation}
where we have identified $T_\beta$ as the transfer matrix for which the ground state
of the Hamiltonian $\ket{GS}$ is its dominant eigenvector.
As previously, we make use of Lemma~\ref{lemma:covarianceIS} in the \ref{appa2} and obtain
\begin{equation}
|\cov_{\omega_s}^{t/\beta} (H_{\boundary}, O)|\le \norminf{H_{\boundary}} \norminf{O}\e^{-t/\xi_\beta} .
\end{equation}
The integration is then bounded by
\begin{equation}
\lim_{\beta \to \infty}\int_L^{\beta/2} \dd t \int_0^1 \dd s\, \cov_{\omega_s}^{t/\beta} (H_{\boundary}, O)\le 
 \xi_\beta \norminf{H_{\boundary}} \norminf{O}e^{-L /\xi_\beta} \, .
\end{equation}

Putting the previous bounds together, and after an optimization over $L$, we get
\begin{equation}
\hspace{-1cm}\Tr \left[O\,\omega(H_{AB})\right] 
	 - \Tr\left[O\,\omega(H)\right]\le 
4\norminf{O}\norminf{H_{\boundary}}\,c \
\xi_\beta \left(1+ \frac{\ell_B}{\xi}-\ln c\right)\e^{-\ell_B/ \xi }\, ,
\end{equation}
showing that temperature can be locally assigned to subsystems for arbitrarily large $\beta$ and gapped Hamiltonians.

\subsection{Criticality}
A system at zero temperature is said to be critical when the gap between the energy ground state (space) and
the first excited state closes to zero in the thermodynamic limit.
The critical exponents $z\nu$ control how the spectral gap $\Delta$ tends to zero
\begin{equation}
\Delta \propto N^{-z\nu}\, ,
\end{equation}
where $N$ is the system's size and $\nu$ is the critical exponent that controls the divergence of the correlation length
\begin{equation}
\xi \propto N^{\nu}\, .
\end{equation}

The previous divergences are a signature of the \emph{scale invariance} that the system experiences at criticality. 
If the critical exponent $z=1$, there is a further symmetry enhancement
and the system becomes \emph{conformal invariant}.
The group of conformal transformations includes, in addition to scale transformations, translations and
rotations.

In 1+1 dimensions, conformal symmetry completely dictates how correlation functions behave and how local expectation values of local observables of infinite systems differ from those taken for finite ones.
Hence, conformal field theory establishes that
\begin{equation}
\Tr\bigl[O\,\omega(H_{AB})\bigr] 
	 - \Tr\bigl[O\,\omega(H)\bigr]
    \simeq \frac{1}{\ell_B^y}
\end{equation}
up to higher order terms, where $y$ is the \emph{scaling dimension} of the operator $H_\boundary$
\cite{Cardy1984, Cardy1986}. 
If $H_\boundary$ is a standard Hamiltonian term, in the sense that the system is homogeneous, its leading scaling dimension is $y=2$.

Once more, we see that by increasing the buffer region temperature can be arbitrarily well assigned.
\newpage
\section{A case study: The Ising chain}

Now we illustrate our results for the quantum Ising chain, which is described by  the Hamiltonian
\begin{equation}\label{eq:numerical-example}
H_N = \frac{1}{2} \sum_{i=1}^{N-1} \sigma_x^i \otimes \sigma_x^{i+1}-\frac{h}{2} \sum_{i=1}^N \sigma_z^i,
\end{equation}
where $\sigma_x^i$ and $\sigma_z^i$ correspond to the Pauli matrices, $h$ characterizes the strength of the magnetic field and $N$ is the number of spins. Notice that the interactions in the above Hamiltonian are of finite range, a crucial assumption in our derivations, see (\ref{eq:def_localH}). 
This model has a quantum phase transition at $h=1$, so it well exemplifies the different regimes discussed above: criticality (only  at zero temperature) and away from it (for zero and non-zero temperatures).

\subsection{Generalized Covariance}
First of all, as in the previous sections, we split the chain in three regions, which are shown in Fig.~\ref{gcov-setting2}. For such a splitting, and in the context of Eq. (\ref{eq:truncation_error_in_terms_of_cov}), we compute the generalized covariance $\cov_{\omega_s}^{t/\beta}(O,O')$ taking for $O$ a local operator in $A$, $O=\sigma_z^{N/2}$, and for $O'$ the boundary Hamiltonian between $B$ and $C$, $O'=H_\boundary$, given by
\begin{equation}
H_\boundary =\frac{1}{2}\left(\sigma_x^{N/2-3} \otimes \sigma_x^{N/2-2}+\sigma_x^{N/2+2} \otimes \sigma_x^{N/2+3}\right)\text{.}
\end{equation}
\begin{figure}[H]
\centering
\includegraphics[scale=0.5]{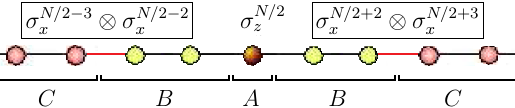}
\caption{Scheme of the subsystem $A$, the boundary region $B$ and their environment $C$. The local operator $\sigma_z^{N/2}$ acts on the subsystem $A$ and the interaction term $H_\boundary$ corresponds to the red lines (connection between the subsystems $AB$ and $C$).}\label{gcov-setting2}
\end{figure}
In order to compute $\cov_{\omega_s}^{t/\beta}(\sigma_z^{N/2},H_\boundary)$, we first diagonalize the Hamiltonian (\ref{eq:numerical-example}) using standard techniques from statistical mechanics, such as the Jordan-Wigner and the Bogoliubov transformation (see \ref{appa}). Once the Hamiltonian is diagonalized, we can straightforwardly construct the corresponding thermal state for every large but finite $N$, and compute $\cov_{\omega_s}^{t/\beta}(\sigma_z^{N/2},H_\boundary)$  using expression (\ref{eq:def_cov}).

Figure \ref{Fig1} shows $\cov_{\omega_s}^{t/\beta}(\sigma_z^{N/2},H_\boundary)$ as a function of $t/\beta$ for several temperatures ($\beta=5,20,1000$), and for $h=0.9,1$ (i.e., near and at criticality). We take $N=40$, which already describes well the thermodynamic limit (recall that we are only interested on the local state, and that the correlations decay exponentially). The area below the curves correspond to the first integral in (\ref{eq:truncation_error_in_terms_of_cov}), which measures how well the local state in $A$ can be approximated by a thermal state in $AB$. 

The results in Fig.~\ref{Fig1} are in agreement with properties (i) and (ii) from Lemma \ref{lemma:covariancePBC} in \ref{appa2}. The first property implies that the covariance is symmetric with respect to $t=\beta/2$, and it follows by taking $l=t$ and $n=\beta$ in (\ref{eq:symmetric}). Second, property (ii) implies that it is bounded by a convex function of $t$ with a maximum at $t=0$ and $t=\beta$ and with a minimum at $t=\beta/2$. Therefore, the covariance satisfies the bound (\ref{eq:convex-bound}).


On the other hand, the covariance is not monotonic in $s$ (see Fig.~\ref{Fig2}). This is somehow counterintuitive, as it shows that the outcomes of two observables with no overlapping support (located in $A$ and in the intersection between $B$ and $C$) do not always become more correlated as $s$, which quantifies the strength of the interaction between $B$ and $C$, increases. 

\begin{figure}[H]
\small
\hspace{1.5cm}\includegraphics[scale=0.7]{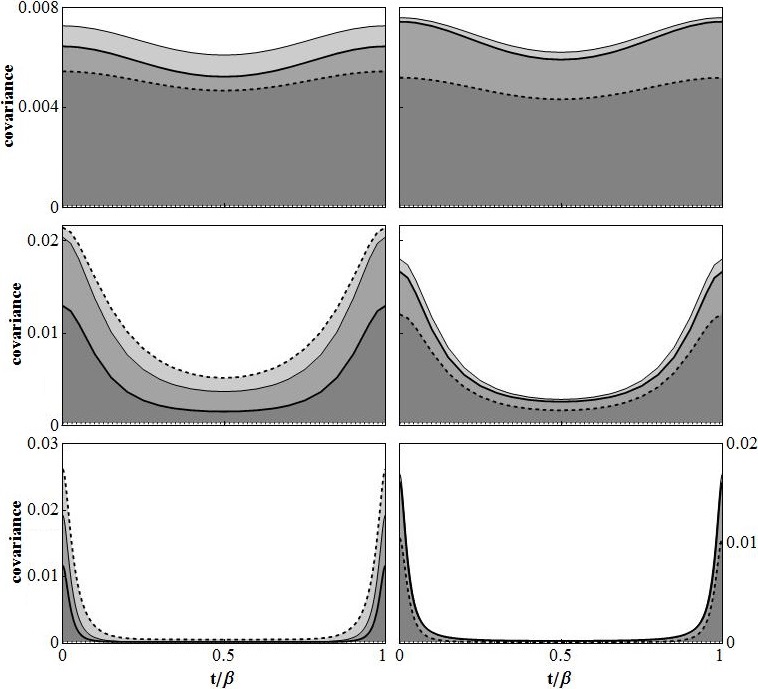}

\caption{Generalized covariance as a function of $t$ for different values of $s$: $s=0,1/3,2/3,1$ for the dotted, dashed, black and thick lines. The figures correspond to inverse temperature $\beta=5$ (top), $\beta=20$ (at the middle) and $\beta=100$ (bottom) and field strength $h=0.9$ (left) and $h=1$ (right). The grey area below the curves corresponds to the first integral of Eq. (\ref{eq:truncation_error_in_terms_of_cov}).}\label{Fig1}
\end{figure}

\begin{figure}[H]
\small
\hspace{1.5cm}\includegraphics[scale=0.7]{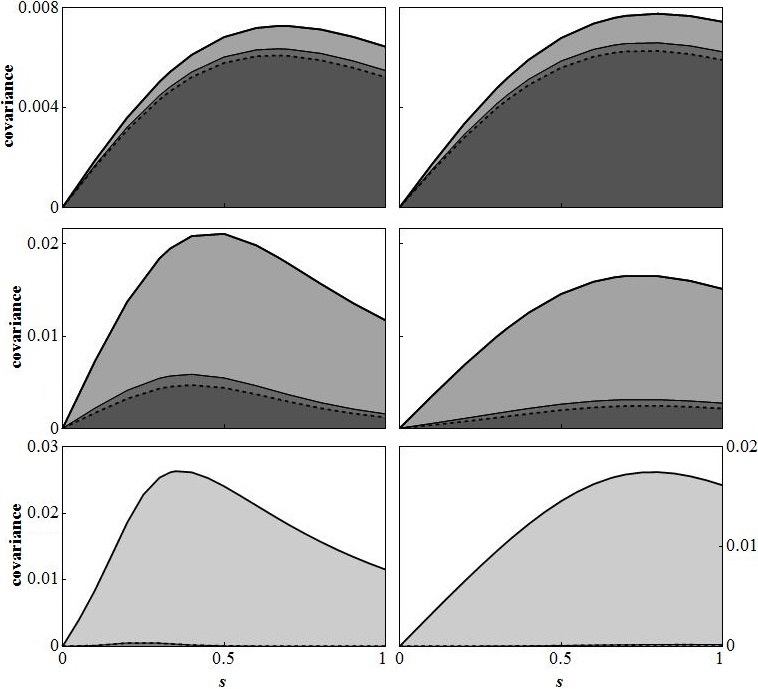}

\caption{Generalized covariance as a function of $s$ for different values of $t$: $t/\beta=0,1/3,1/2$ for the thick, black and dashed lines. The figures correspond to inverse temperature $\beta=5$ (top), $\beta=20$ (at the middle) and $\beta=100$ (bottom) and field strength $h=0.9$ (left) and $h=1$ (right). The grey area below the curves corresponds to the second integral of Eq. (\ref{eq:truncation_error_in_terms_of_cov}). Notice that, due to the symmetry in $t$, the values $t/\beta=2/3,1$ are also considered.}\label{Fig2}
\end{figure}

\subsection{Locality of temperature in the quantum Ising chain}\label{localityIsing}

In our analytical findings, the generalized covariance naturally appeared as a tool to solve the locality of temperature problem, see (\ref{eq:truncation_error_in_terms_of_cov}). This motivated the previous section, where we studied its properties in the context of the quantum Ising chain. Nevertheless, in order to obtain (\ref{eq:truncation_error_in_terms_of_cov}), one still needs to integrate $\cov_{\omega_s}^{t/\beta}(O,H_{\boundary})$ over $s$ and $\tau$. While this approach is useful when dealing with arbitrary generic systems, here we are dealing with a specific model that is furthermore solvable, so we can take a more direct approach. Concretely, we first compute 
\begin{equation}\label{eq:rhoA}
\rho_A={\rm Tr_{BC}}(\omega(H)), \hspace*{10mm} {\rm with} \hspace*{3mm} H\equiv H_{\infty}
\end{equation} 
and
\begin{equation} \label{eq:rhoA'}
\rho'_A={\rm Tr_{BC}}(\omega(H_{AB})), \hspace*{10mm} {\rm with} \hspace*{3mm} H_{AB}\equiv H_{N},
\end{equation}
for different sizes $N$ of the region $AB$. Secondly, we measure the distinguishability between such states via the quantum fidelity, which is advantageous for  computational reasons. For two states, $\rho_A'$ and $\rho_A$, the fidelity is defined as \cite{mikeike}
\begin{equation}
F[\rho_A,\rho'_{A}]=\tr\left[\sqrt{\sqrt{\rho_A}\rho'_{A}\sqrt{\rho_A}}\right].
\end{equation}
It satisfies $0\leq F\leq1$ and $F[\rho_A,\rho'_{A}]=1$ if and only if $\rho_A=\rho'_{A}$. In order to relate this approach to our previous considerations, we note the following relation between the trace distance, $D[\rho_A,\rho'_{A}]$, and $F(\rho_A,\rho'_{A})$, given in \cite{FvdG},
\begin{equation}
1-F\leq D\leq \sqrt{1-F^2}.
\end{equation}
Therefore, the fidelity provides us with upper and lower bounds to (\ref{dist2}). In particular, when $D[\rho_A,\rho'_{A}]\rightarrow 0$ then $F[\rho_A,\rho'_{A}]\rightarrow 1$, and in that case we say that the temperature is locally well defined.

From now on, we take for $A$ a two spin subsystem, an infinite chain as the total system, and we compute $F(\rho_A,\rho'_{A})$ as a function of the size of $AB$, with $N=2+2l_B$, and the different parameters of the Hamiltonian. 


In order to compute $\rho_A$ and $\rho_A'$, it is convenient to apply the Jordan-Wigner transformation to (\ref{eq:numerical-example}), which maps spin operators $\sigma^i_{x,y,z}$ to fermionic operators $a_i, a_i^{\dagger}$ (see \ref{appa} for details). The Hamiltonian (\ref{eq:numerical-example}) takes then the form,
\begin{equation}\label{eq:numerical-example02}
H_N=\sum_{i,j=1}^N A_{ij} a_i a_j^\dagger+\frac{1}{2}\sum_{i,j=1}^N B_{ij}(a_i^\dagger a_j^\dagger-a_i a_j),
\end{equation} 
which is quadratic in terms of the fermionic operators. It follows that thermal states, as well as their local states, are gaussian operators. Therefore it is possible to describe them by their covariant matrix, whose size is only $\mathcal{O}(N^2)$.  This allows us to compute   $\rho'_{A}$ in (\ref{eq:rhoA'}) for finite but large $l_B$; while in the limit $N\rightarrow \infty$, i.e. to compute $\rho_A$ in (\ref{eq:rhoA}),  we rely on the analytical results from \cite{Barouch}. The explicit calculations are done in \ref{appa}. 

\subsubsection{Non-zero temperatures}
\hfill \break
Figure \ref{Fig3} shows $F(\rho_A,\rho'_{A})$ as a function of $\beta$ and $h$, for $N=4$ (left) and $N=20$ (right). Recall that $N$, with $N=2+2l_B$, defines the size of the boundary region which is used to approximate $\rho_A$ by $\rho'_{A}$. Even if the boundary is small, $N=4$, the fidelity is close to 1 for all values of $\beta$ and $h$, and thus the temperature is locally well-defined. As expected, $F(\rho_A,\rho'_{A})$ increases with $N$ (see Fig.~\ref{Fig5}).

We also observe in Fig.~\ref{Fig3} that the fidelity becomes minimal near $h=1$, which is the phase transition point. As $N$ increases, this minimum is shifted to $h=1$. At this point the spatial correlations also increase, which suggests a relation between both quantities. 

In order to further explore this connection, we compute the scaling of $F(\rho_A,\rho'_{A})$ with $N$, and compare it to the decay of the correlations. The behaviour of $F(\rho_A,\rho'_{A})$ is plotted in Fig.~\ref{Fig5}, which clearly shows that the fidelity follows an exponential law with $N$, given by
\begin{equation}
F(\rho_A,\rho'_{A})\sim 1-e^{-\frac{N}{2\xi_S}}\text{,}
\end{equation}
where $\xi_S$ is a parameter that characterizes the slope of the function. On the other hand,  the correlations between a local observable in $A$, $\sigma_z^i$, and one in the intersection of $B$ and $C$, $\sigma_z^{i+d}$,
\begin{equation}\nonumber
\corr(\sigma_z^i, \sigma_z ^{i+d}) =\langle \sigma_z^i \sigma_z ^{i+d} \rangle - \langle \sigma_z^i \rangle \langle \sigma_z^{i+d} \rangle \text{,}
\end{equation}
can be obtained through the two-spin correlation function $\langle \sigma_z^i \sigma_z ^{i+d} \rangle$ in  \cite{Barouch}. Their asymptotic behaviour is also exponential with $d$, 
\begin{equation}\nonumber
\corr(\sigma_z^i, \sigma_z ^{i+d}) \sim e^{-\frac{d}{\xi}}\text{,}
\end{equation}
where $\xi$ is the correlation length. Now, identifying $d$, the distance between particles, with $N/2$, which is roughly the size of  $B$, we obtain from the numerical results in Fig.~\ref{Fig6} the following simple relation,
\begin{equation}\nonumber
\xi = 2\xi_S \text{.}
\end{equation}
Roughly speaking, the quality of the approximation $\rho_A'$ is directly related to the strength of the correlations in the system. This relation is in good agreement with previous considerations in \cite{Michael}, where the correlation length is related to the error of the cluster approximation \cite{Michael,MichaelII}.

In summary, temperature can be assigned to the local system for all  $h$ and non-zero $\beta$ by taking a small boundary region (with $N \geq 4$, and thus $l_B\geq 2$). We have shown that this is directly connected to the exponential decay of the correlations with the distance, which makes the local state of a thermal state only be sensible to its closest boundary. 

\begin{figure}[H]
\centering
\includegraphics[scale=0.5]{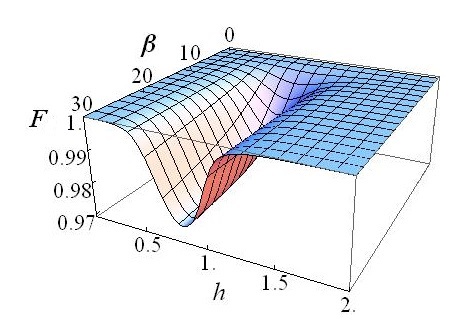}
\hspace{0.2cm}
\includegraphics[scale=0.5]{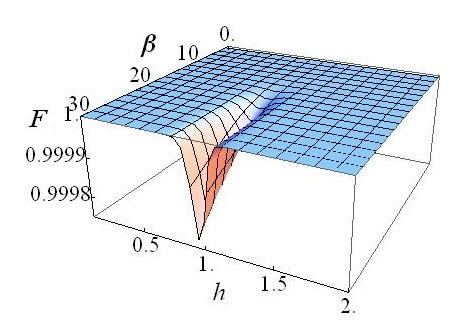}
\caption{Fidelity $F(\rho_A,\rho'_{A})$ as a function of $\beta$ and the strength of the magnetic field $h$ for $N=4$ (left) and $N=20$ (right). The temperature is locally well-defined provided that $F(\rho_A,\rho'_{A})\approx 1$.}\label{Fig3}
\end{figure}


\begin{figure}[H]
\centering
\includegraphics[scale=0.7]{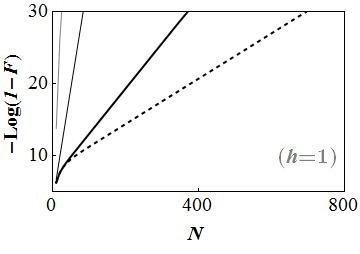}\hspace{0.5cm}
\includegraphics[scale=0.7]{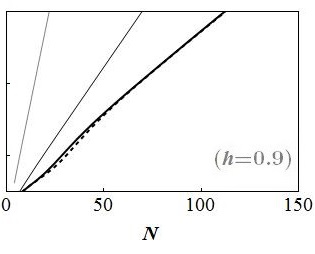}\\
\caption{Function of fidelity, $-\log(1-F[\rho_A,\rho'_{A}])$, as a function of $N$ for $h=1$ (left) and $h=0.9$ (right). The inverse temperature is $\beta=5,20,100,200$ for the grey, black, thick and dashed lines.}\label{Fig5}
\end{figure}

\begin{figure}[H]
\centering
\includegraphics[scale=0.5]{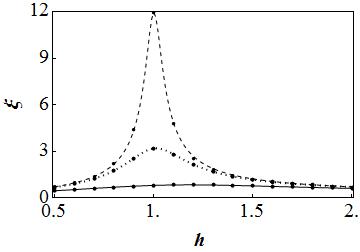}
\caption{Correlation length, $\xi$, as a function of $h$. The black spots correspond to the numerical values for $2\xi_S$. The inverse temperature is $\beta=5,20,75$ for the black, dotted and dashed lines.}\label{Fig6}
\end{figure}

\subsubsection{Absolute zero temperature}
\hfill \break
The same conclusions apply at zero temperature, as the fidelity is also close to 1 for all $h$ and $N\geq 4$. It also has a minimum near the critical point.

Nonetheless, the scaling of the fidelity (or more precisely $1-F$) with $N$ can differ from the previous case. While the scaling is generally exponential at zero temperature, it becomes a power law  at the phase transition point (see Fig.~\ref{Fig10}),
\begin{equation}
F(\rho_A,\rho'_{A})\sim 1-N^{-C_s} \text{.}
\end{equation}
This type of decay is also obtained for the correlations as  a function of the distance, which again shows a direct connection between the quality of the approximation (quantified by $F(\rho_A,\rho'_{A})$) and the strength of the correlations.


\begin{figure}[H]
\centering
\includegraphics[scale=0.7]{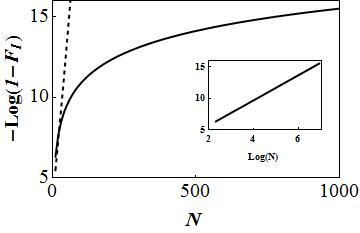}
\caption{Function of fidelity, $-\log(1-F(\rho_A,\rho'_{A}))$, as a function of $N$ for $\beta \rightarrow \infty$. The field strength is $h=0.9, 1.$ for the dashed and black lines.}\label{Fig10}
\end{figure}

\section{Conclusions}

In this work we studied the locality aspect of the Zeroth law of thermodynamics for quantum spin chains with strong but finite range interactions. Upon noting that in the presence of strong interactions the marginal states of a global thermal state do not take the canonical form themselves, we go on defining an effective thermal state for a subsystem. The latter being the reduced density matrix of the subsystem considered as a part of a slightly bigger, enveloping thermal system (see Fig.~\ref{fig:BoundaryRegion}). Borrowing concepts from quantum information theory and employing methods from quantum statistical mechanics, we relate the accuracy with which the effective thermal state describes the actual state of the subsystem to the correlations present in the whole system (see Eqs.~(\ref{dist2}, \ref{eq:def_cov}, \ref{eq:truncation_error_in_terms_of_cov}) and the discussion around them). We further utilize a Trotter approximation formula \cite{hastings,Molnar:2014} to build a tensor network representation of the corresponding states of the subsystem to provide upper bounds on the aforementioned accuracy, depending on the size of the enveloping thermal system, and such physical quantities as the spectral gap of the global hamiltonian and the temperature of the parent chain. At the quantum critical point, we use already existing asymptotical formulas from the conformal field theory.

Lastly, we exemplify our analytical findings by analyzing the quantum Ising chain. The latter is complex enough to have a quantum phase transition point, but simple enough to allow for an exact diagonalization by standard tools of statistical mechanics, thereby serving as a perfect testbed for our analytical upper bounds. In particular, we find that, e.g., away from criticality, the envelope which is bigger than the system only by one layer of spins, is enough to approximate the actual state with a rather high precision (see, e.g., Fig.~\ref{Fig3}).

Our results for one dimensional systems with finite range interactions suggest that investigating the properties of the effective thermal states in higher dimensions and, possibly, harbouring long range of interactions, is an interesting direction for further research, which can have far-reaching implications in efficient simulation of the subsystems of large and strongly interacting quantum systems.
Another interesting open question beyond the scope of this work 
is whether these results can be generalized to other other types of equilibrium states, e.~g.~
the so called \emph{Generalized Gibbs Ensemble} and steady states of local Liouvillians.

In a more practical vein, another field where our findings may find implications is quantum thermometry with non-negligible interactions \cite{antonella}.

\addcontentsline{toc}{section}{Acknowledgments}

\section*{Acknowledgments} This work is supported by the IP project SIQS, the Spanish project FOQUS, and the Generalitat de Catalunya (SGR 875). S.H.S. and M.P.L. acknowledge funding from the "la Caixa"-Severo Ochoa program, M.P.L is also supported by the Spanish grant FPU13/05988, A.R. is supported by the Beatriu de Pin\'os fellowship (BP-DGR 2013), L.T. is supported by the ERC AdG OSYRIS and A.A. is supported by the ERC CoG QITBOX. All authours thank the EU COST Action MP1209 ``Thermodynamics in the quantum regime''.

\appendix

\section{Proofs of the Lemmas}\label{appa2}
In this section, we present the proofs of the lemma's used in Secs. 3 and 4
to get statements on the locality of temperature for gapped systems.
They consist of how different covariances decay for one dimensional systems with a gapped transfer matrix $T$. 

\begin{lemma}\label{lemma:covarianceIS}[Infinite chain]
Given a gapped transfer matrix $T$ with eigenvalues $\lambda_k$ labelled
in decreasing order, i.~e.\ $|\lambda_k|\geq |\lambda_{k'}|$ for all $k<k'$, a right (left) dominant eigenvector $\ket{1_R}$ ($\bra{1_L}$),
the largest eigenvalue $\lambda_1=1$,
and a covariance between any two operators $O$ and $O'$ separated by a distance $\ell$ defined as
\begin{equation}
\cov(\ell;O,O',T)=\bra{1_L} O^\dagger T^{\ell} O' \ket{1_R}-
\bra{1_L} O^\dagger T^{\ell} \ket{1_R}\bra{1_L} T^{\ell} O' \ket{1_R}\, .
\end{equation}
Then, the covariance can be proven to
decay exponentially in $\ell$
\begin{equation}
|\cov(\ell;O,O',T)|\le \norm{ O\ket{1_L}}\norm{ O'\ket{1_R}}\e^{-\ell/\xi}
\end{equation}
where $\xi=-\left(\ln |\lambda_2|\right)^{-1}>0$ is the correlation length and
the $\norm{\ket{\varphi}}=\langle \varphi \ket{\varphi}^{1/2}$ is the norm of the vector $\ket{\varphi}$.
\end{lemma}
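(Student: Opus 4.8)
The plan is to collapse the covariance onto the \emph{subdominant part} of $T^{\ell}$ and then finish with Cauchy--Schwarz. The first step uses that $\ket{1_R}$ and $\bra{1_L}$ are right and left eigenvectors of $T$ with eigenvalue $\lambda_1=1$, so $T^{\ell}\ket{1_R}=\ket{1_R}$ and $\bra{1_L}T^{\ell}=\bra{1_L}$. Plugging this into the definition makes the disconnected term independent of $\ell$: $\bra{1_L}O^{\dagger}T^{\ell}\ket{1_R}\bra{1_L}T^{\ell}O'\ket{1_R}=\bra{1_L}O^{\dagger}\ket{1_R}\bra{1_L}O'\ket{1_R}$. (The gap hypothesis forces $\lambda_1$ to be simple, so the rank-one object below is well defined once we normalise $\braket{1_L}{1_R}=1$.)

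Next I would write $T^{\ell}=P+Q_{\ell}$, where $P:=\ket{1_R}\bra{1_L}$ projects onto the dominant eigenspace and $Q_{\ell}:=T^{\ell}-P$. Using the elementary identities $T^{\ell}P=PT=P^{2}=P$ one gets $Q_{\ell}=Q_{1}^{\ell}$ with $Q_{1}=T-P=(\id-P)T(\id-P)=:S$, i.e.\ $Q_{\ell}=S^{\ell}$ is the $\ell$-th power of the restriction of $T$ to the complement of the dominant eigenspace, so that its spectral radius is $|\lambda_{2}|$. Inserting the splitting into the connected term, the $P$-part reproduces exactly $\bra{1_L}O^{\dagger}\ket{1_R}\bra{1_L}O'\ket{1_R}$ and cancels the disconnected term, leaving the compact identity
\begin{equation}
\cov(\ell;O,O',T)=\bra{1_L}O^{\dagger}\,S^{\ell}\,O'\ket{1_R}.
\end{equation}

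Finally, writing $\bra{1_L}O^{\dagger}=(O\ket{1_L})^{\dagger}$ and applying Cauchy--Schwarz gives $|\cov(\ell;O,O',T)|\le\norm{O\ket{1_L}}\,\norm{S^{\ell}O'\ket{1_R}}\le\norm{O\ket{1_L}}\,\norm{O'\ket{1_R}}\,\norm{S^{\ell}}$, so the statement follows once we establish $\norm{S^{\ell}}\le|\lambda_{2}|^{\ell}=\e^{-\ell/\xi}$. This last bound is where I expect the real work to be: in general only $\norm{S^{\ell}}^{1/\ell}\to|\lambda_{2}|$ holds, and a non-normal $T$ can generate a polynomial (Jordan-block) prefactor or transient growth, so a priori one only gets $\norm{S^{\ell}}\le C\,\mu^{\ell}$ for a model-dependent $C$ and any $\mu>|\lambda_{2}|$. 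I would therefore either add the assumption that $T$ is normal -- which is harmless in the Trotter construction, where a symmetric splitting renders the spatial transfer matrix Hermitian, so that $P$ is an orthogonal projector, $\norm{\id-P}=1$, and $\norm{S^{\ell}}=|\lambda_{2}|^{\ell}$ follows at once from the spectral theorem, closing the proof exactly as stated -- or absorb the prefactor into the constants at the cost of replacing $\xi$ by a slightly larger effective correlation length. The only routine checks left are the algebraic identities $T^{\ell}P=PT=P$ and $Q_{\ell}=S^{\ell}$.
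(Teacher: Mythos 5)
Your proposal is correct and follows essentially the same route as the paper: the paper's proof introduces $\tilde T = T-\ket{1_R}\bra{1_L}$ (your $S$), uses $\tilde T^{\ell}=T^{\ell}-\ket{1_R}\bra{1_L}$ to reach the same identity $\cov(\ell;O,O',T)=\bra{1_L}O^{\dagger}\tilde T^{\ell}O'\ket{1_R}$, and then applies Cauchy--Schwarz. The only divergence is the final step --- the paper bounds $\norm{(\tilde T^{\dagger})^{\ell}O\ket{1_L}}$ by expanding $\tilde T^{\ell}(\tilde T^{\dagger})^{\ell}$ in the eigenbasis $\{\ket{k_R}\}$ and invoking Parseval, whereas you bound $\norm{S^{\ell}O'\ket{1_R}}\le\norm{S^{\ell}}\norm{O'\ket{1_R}}$ --- and the caveat you rightly flag is not specific to your variant: the paper's Parseval step equally requires the $\ket{k_R}$ to form an orthonormal set, i.e.\ $T$ normal, an assumption the paper uses silently rather than stating. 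Your explicit identification of this hypothesis (harmless for the symmetric Trotter splitting, or absorbable into a constant for non-normal $T$) is the more careful treatment.
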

\begin{proof}
Let us first introduce the new transfer matrix $\tilde T= T-\ket{1_R}\bra{1_L}$ to rewrite the covariance as
\begin{equation}
\cov(\ell;O,O',T)=
\bra{1_L}  O^\dagger \tilde T^{\ell} O' \ket{1_R}\, ,
\end{equation}
where we have used that $\tilde T^\ell=T^\ell -\ket{1_R}\bra{1_L}$.
By using the Cauchy-Schwarz inequality one gets
\begin{equation}
\left|\bra{1_L} O^\dagger  \tilde T^{\ell} O' \ket{1_R}\right|\le
\norm{O'\ket{1_R}}\left(\bra{1_L}  O^\dagger   \tilde T^{\ell} (\tilde T^\dagger)^{\ell} O \ket{1_L}\right)^{1/2} .
\end{equation}
Let us now consider second factor separately.
By inserting a resolution of the identity, a straight forward calculation
leads to
\begin{eqnarray}\label{eq:normofaTvector}
\hspace{-1cm}
\bra{1_L} O^\dagger \tilde T^{\ell} (\tilde T^\dagger)^{\ell}  O \ket{1_L}&=&
\sum_{k\ge 2}|\lambda_k|^{2\ell}\bra{1_L}O^\dagger \ket{k_R}\bra{k_R} O\ket{1_L}
\nonumber \\
&\le & |\lambda_2|^{2\ell}\sum_{k\ge 2}\bra{1_L} O^\dagger \ket{k_R}\bra{k_R} O\ket{1_L} 
\le |\lambda_2|^{2\ell} \norm{ O \ket{1_L}}^2
\end{eqnarray}
where we have used that $|\lambda_2|$ is an upper-bound for all the
$|\lambda_k|$ with $k\ge 2$ and the Parseval inequality. 

Finally, we put everything together and get
\begin{equation}
|\cov(\ell;O,O',T)|\le \norm{ O\ket{1_L}}\norm{ O'\ket{1_R}} \e^{-\ell/\xi}
\end{equation}
where the 2nd largest eigenvalue $|\lambda_2|$ has been written in terms of the correlation length $\xi$.
\end{proof}

\begin{lemma}\label{lemma:covariancePBC}[Periodic boundary conditions] 
Given a system with periodic boundary conditions, an Hermitian transfer matrix T with a gap $\Delta$ 
and a covariance between any two operators $O$ and $O'$ separated by a distance $\ell$ defined as
\begin{equation}\label{eq:cov-matrix-form2}
\cov(\ell;n,O,O',T)=\frac{\Tr (O T^\ell O'T^{n-\ell}) }{\Tr (T^n)}-
\frac{\Tr(OT^n)}{\Tr (T^n)}\frac{\Tr(O'T^n)}{\Tr (T^n)} \, .
\end{equation}
where $0 \le \ell \le n$ and $n$ is the system size.
Then, the covariance $\cov(\ell)=\cov(\ell;n,O,O',T)$ as a function of $\ell$ fulfills following properties:
\begin{description}
\item[({\it i})] Its real part is symmetric respect to the $n/2$ and the interchange of $A$ and $B$, i.~e.
\begin{equation}\label{eq:symmetric}
\cov(n-\ell;n,O,O',T)=\cov(\ell;n,O,O',T)^*=\cov(\ell;n,O,O',T)\, .
\end{equation}
\item[({\it ii})]  Given two operators $O$ and $O'$, there always exist two other operators $O_M$ and $O_M'$ such that
\begin{equation}\label{eq:convex-bound}
|\cov(\ell;n,O,O',T)|\le \cov(\ell;n,O_M,O_M',T)
\end{equation}
and where $\cov(\ell;n,O_M,O_M',T)$ is a convex function in $\ell$ that is maximum at $\ell=0$ and 1, 
and reaches its minimum at $\ell=n/2$.
\end{description}
\end{lemma}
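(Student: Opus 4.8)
The plan is to work in the eigenbasis of the Hermitian transfer matrix $T$, which I take to have strictly positive eigenvalues $\lambda_1\ge\lambda_2\ge\cdots>0$ (as holds for the Trotter transfer matrices at hand), and to assume $O$ and $O'$ Hermitian, as in all the applications; both claims then reduce to elementary facts about sums of exponentials in $\ell$. For part (\textit{i}) only cyclicity of the trace and $T^\dagger=T$ are needed: with $f(\ell):=\Tr(OT^\ell O'T^{n-\ell})$ and using $\overline{\Tr M}=\Tr(M^\dagger)$ one gets $\overline{f(\ell)}=\Tr(T^{n-\ell}(O')^\dagger T^\ell O^\dagger)=\Tr(T^{n-\ell}O'T^\ell O)=f(n-\ell)$ after a cyclic rearrangement, while a further cyclic shift identifies $f(n-\ell)$ with the same trace with $O$ and $O'$ exchanged. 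Since $\Tr(OT^n)$ and $\Tr(O'T^n)$ are real, these pass to the covariance, giving $\overline{\cov(\ell;n,O,O',T)}=\cov(n-\ell;n,O,O',T)=\cov(\ell;n,O',O,T)$; taking real parts is exactly the stated symmetry about $n/2$ and under the interchange of the operators, and for $O=O'$ the covariance is real so (\ref{eq:symmetric}) holds verbatim.

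For part (\textit{ii}) I would first subtract thermal means: with $p_k:=\lambda_k^n/\Tr(T^n)$ and $\tilde O:=O-(\sum_kp_kO_{kk})\id$ (and $\tilde O'$ likewise) the covariance is unchanged while the product term vanishes, so
\begin{equation}
\cov(\ell;n,O,O',T)=c_0+\sum_{k\neq k'}\frac{\lambda_k^{\ell}\,\lambda_{k'}^{\,n-\ell}}{\Tr(T^n)}\,\tilde O_{kk'}\tilde O'_{k'k},\qquad c_0:=\sum_k p_k\,\tilde O_{kk}\tilde O'_{kk},
\end{equation}
where $c_0$ is real and \emph{independent of $\ell$}. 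By the triangle inequality and $|\tilde O'_{k'k}|=|\tilde O'_{kk'}|$ (Hermiticity of $\tilde O'$), $|\cov(\ell;n,O,O',T)|$ is at most $|c_0|+\sum_{k\neq k'}\frac{\lambda_k^{\ell}\lambda_{k'}^{\,n-\ell}}{\Tr(T^n)}\,|\tilde O_{kk'}|\,|\tilde O'_{kk'}|$, a manifestly positive quantity.

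It remains to realise this bound as an honest covariance. Take $O_M=O_M'$ to be the Hermitian operator whose $T$-eigenbasis matrix has off-diagonal entries $(O_M)_{kk'}:=(|\tilde O_{kk'}|\,|\tilde O'_{kk'}|)^{1/2}\ge0$ (symmetric in $k,k'$) and diagonal entries $d_k$ chosen, e.g.\ two-valued — which is possible because $|c_0|\le\norminf{\tilde O}\,\norminf{\tilde O'}<\infty$ — so that the classical variance $\mathrm{Var}_p(d):=\sum_kp_kd_k^2-(\sum_kp_kd_k)^2$ is at least $|c_0|$. Then
\begin{equation}
\cov(\ell;n,O_M,O_M,T)=\mathrm{Var}_p(d)+\sum_{k<k'}\frac{(O_M)_{kk'}^2}{\Tr(T^n)}\,(\lambda_k^{\ell}\lambda_{k'}^{\,n-\ell}+\lambda_{k'}^{\ell}\lambda_k^{\,n-\ell})
\end{equation}
dominates the bound term by term, and each bracket equals $2(\lambda_k\lambda_{k'})^{n/2}\cosh((\ell-n/2)\ln(\lambda_k/\lambda_{k'}))$, an even, convex function of $\ell-n/2$ with minimum at $\ell=n/2$ and equal maxima at the two endpoints of $[0,n]$. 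A non-negative combination of such functions plus the constant $\mathrm{Var}_p(d)$ is again convex with the same extrema, which is (\ref{eq:convex-bound}).

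The trace gymnastics of (\textit{i}) and the $\cosh$ rewriting are routine; the step I expect to be the real obstacle is the reverse-engineering in (\textit{ii}). One has to check that the positive convex upper bound really is a covariance of admissible operators, which forces the $\ell$-independent remainder $c_0$ coming from the diagonal of the covariance to be absorbable into (and never larger than) the variance that a choice of diagonal for $O_M$ can supply, and at bottom it uses positivity of the spectrum of $T$ so that every $\lambda_k^{\ell}\lambda_{k'}^{\,n-\ell}$ is positive and the $\cosh$-convexity survives; were $T$ allowed eigenvalues of mixed sign the argument would fail, so positivity of the transfer matrix is the essential structural input.
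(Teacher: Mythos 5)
Your proof is correct, and while part (\textit{i}) coincides with the paper's (cyclicity of the trace plus Hermiticity of $T$, with the same caveat that reality of the covariance itself needs $O=O'$ or a reality assumption on the matrix elements), your part (\textit{ii}) takes a genuinely different route to constructing $O_M,O_M'$. The paper expands $f(\ell)=\Re\,\Tr(OT^\ell O'T^{n-\ell})/\Tr(T^n)$ in the eigenbasis of $T$, obtains coefficients $c_k=\Re\bra{1}O\proj{k}O'\ket{1}$ and $d_{kk'}=\Re\bra{k}O\proj{k'}O'\ket{k}$ multiplying the same $\cosh\bigl((\ell-n/2)/\xi_{kk'}\bigr)$ profiles you derive, and then defines $O_M$ by flipping the signs of those matrix elements of $O$ (keeping $O_M'=O'$) for which the coefficients are negative, so that the new coefficients are $|c_k|,|d_{kk'}|$. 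Your construction instead subtracts the thermal means first (killing the disconnected term), isolates the $\ell$-independent diagonal contribution $c_0$, takes a single Hermitian $O_M=O_M'$ with off-diagonal entries $(|\tilde O_{kk'}||\tilde O'_{kk'}|)^{1/2}$, and absorbs $|c_0|$ into the classical variance of a freely chosen diagonal. This buys you two things the paper glosses over: the coefficients of the convex $\cosh$ terms are automatically squares, hence nonnegative without any sign-flipping consistency check against Hermiticity, and the $\ell$-independent remainder is explicitly accounted for rather than left implicit in the reshuffling of constant terms. Both arguments rest on the same structural input you correctly identify -- positivity of the spectrum of $T$, so that $\lambda_k^{\ell}\lambda_{k'}^{n-\ell}>0$ and the $\cosh$ rewriting is legitimate -- which the lemma's hypothesis (``Hermitian with a gap'') does not literally guarantee but which holds for the Trotter transfer matrices in question; flagging it explicitly, as you do, is a point in your favour. (The lemma's ``maximum at $\ell=0$ and $1$'' is a typo for $\ell=0$ and $n$, as both your argument and the paper's show.)
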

\begin{proof}
Statement ({\it i}) is a simple consequence of the following elementary equalities
\begin{eqnarray}
\left(\Tr (O T^\ell O'T^{n-\ell})\right)^*&=&\Tr\left( (O T^\ell O'T^{n-\ell})^\dagger\right) \nonumber \\
&=&\Tr(T^{n-\ell} O' T^\ell O) 
=\Tr(O' T^\ell O T^{n-\ell} ) \, .
\end{eqnarray}
In order to prove ({\it ii}), let us focus on
the first term in Eq.~(\ref{eq:cov-matrix-form2}),
since note that the second one does not depend on $\ell$. With this aim, we define
\begin{equation}
f(\ell; O,O')=\Re \left[ \frac{\Tr (O T^\ell O'T^{n-\ell}) }{\Tr (T^n)}\right]\, .
\end{equation}
By introducing the transfer matrix in its spectral representation, $f(\ell)$ can be written as
\begin{eqnarray}\label{eq:f_ell_spectral}
f(\ell; O,O')&=&\frac{s_1^n}{\Tr (T^n)}\left[\bra{1}O\proj{1}O'\ket{1}
+ \sum_{k\ge 2}c_k\left( \left(\frac{\lambda_k}{\lambda_1}\right)^\ell 
+\left(\frac{\lambda_k}{\lambda_1}\right)^{n-\ell}\right)\right. \nonumber \\
&+& \left. 
\sum_{k,k'\ge 2}d_{kk'}\left(\frac{\lambda_k}{\lambda_1}\right)^{\ell}\left(\frac{\lambda_{k'}}{\lambda_1}\right)^{n-\ell}\right]\, ,
\end{eqnarray}
where $c_k=\Re \left(\bra{1}O\proj{k}O' \ket{1}\right)$ and $d_{kk'}=\Re\left(\bra{k}O\proj{k'}O'\ket{k}\right)$.
Note now that 
\begin{equation}
\left(\frac{\lambda_k}{\lambda_1}\right)^\ell+\left(\frac{\lambda_k}{\lambda_1}\right)^{n-\ell}
=2\e^{-\frac{n}{2\xi_k}}\cosh\left( \frac{\ell-n/2}{\xi_k}\right) \, ,
\end{equation}
where the correlation length $\xi_k$ is defined as
\begin{equation}
\xi_k^{-1}:= \ln \left(\frac{\lambda_1}{\lambda_k} \right) \, .
\end{equation} 
Note that as the eigenvalues of the transfer matrix are ordered, a larger $k$ implies a shorter correlation length $\xi_k$.

In a similar way, we can also simplify the terms in the last sum in Eq.~(\ref{eq:f_ell_spectral}). Note that
\begin{eqnarray}
\left(\frac{\lambda_k}{\lambda_1}\right)^\ell\left(\frac{\lambda_{k'}}{\lambda_1}\right)^{n-\ell}+
\left(\frac{\lambda_{k'}}{\lambda_1}\right)^\ell\left(\frac{\lambda_{k}}{\lambda_1}\right)^{n-\ell}
&=&\e^{-\frac{n}{\xi_{k'}}-\frac{\ell}{\xi_{kk'}}}+\e^{-\frac{n}{\xi_{k}}+\frac{\ell}{\xi_{kk'}}}\nonumber\\
&=&2\e^{-\frac{n}{2\xi_{kk'}}} \cosh \left(\frac{\ell-n/2}{\xi_{kk'}}\right)\, .
\end{eqnarray}
where the length $\xi_{kk'}$ has been defined as $\xi_{kk'}^{-1}=\xi_k^{-1}-\xi_{k'}^{-1}$.
Puting the previous steps together, we get
\begin{eqnarray}\label{eq:expression-f_ell}
f(\ell;A,B)&=&\frac{\lambda_1^n}{\Tr (T^n)}\left[\bra{1}A\proj{1}B\ket{1}+ 
2\sum_{k\ge 2}\e^{-\frac{n}{2\xi_k}}c_k\cosh\left( \frac{\ell-n/2}{\xi_k}\right) \right. \nonumber \\
&+& \left.2\sum_{k\ge 2} d_{kk}\e^{-\frac{n}{\xi_k}} + 2\sum_{2 \le k < k'}d_{kk'}\e^{-\frac{n}{2\xi_{kk'}}} \cosh \left(\frac{\ell-n/2}{\xi_{kk'}}\right)\right]\, .
\end{eqnarray}
Note that in general the covariance could oscillate in $\ell$, since $c_k$ and $d_{kk'}$ could take negative values for some $k$ and $k'$. 
Nevertheless, given two operators $A$ and $B$ for which some $c_k$ and $d_{kk'}$ are negative, there always exist two operators $O_M$ and $O_M'$ such that their respective $\tilde c_k=|c_k|$ and $\tilde d_{kk'}=|d_{kk'}|$. For instance,
$\bra{k}O_M\ket{k'}=-\bra{k}O\ket{k'}$ for the $k'$ and $k$-s with negative coefficents and $\bra{k}O_M\ket{k'}=\bra{k}O_M\ket{k'}$ otherwise, and $O_M' = O'$.
This covariance $f(\ell; O_M,O_M')$ is an upper bound to the absolute value of the previous one covariance $f(\ell; O,O')$,
\begin{eqnarray}\label{eq:bound-f_ell}
f(\ell;O_M,O_M') &=& \frac{\lambda_1^n}{\Tr (T^n)}\left[|\bra{1}O\proj{1}O'\ket{1}|+ 
2\sum_{k\ge 2}\e^{-\frac{n}{2\xi_k}}|c_k|\cosh\left( \frac{\ell-n/2}{\xi_k}\right) \right. \nonumber \\
\hspace{-1cm}&+& \left.2\sum_{k\ge 2} |d_{kk}|\e^{-\frac{n}{\xi_k}}+2\sum_{2 \le k < k'}|d_{kk'}|\e^{-\frac{n}{2\xi_{kk'}}} \cosh \left(\frac{\ell-n/2}{\xi_{kk'}}\right)\right]
\nonumber \\
\hspace{-1cm}&\ge&  f(\ell; O,O')\, .
\end{eqnarray}
As the sum in Eq.~(\ref{eq:bound-f_ell}) is a linear combination of convex functions with positive coefficents,
 $f(\ell;O_M,O_M')$ is also convex.
It is also obvious from the properties of the $\cosh()$ function, that $f(\ell;O_M,O_M')$ reaches its maximum at $\ell=0$ and $n$, and its minimum at $\ell=n/2$.
\end{proof}





\section{Some bounds on the generalized covariance} \label{appfin}

In this appendix we bound the different terms in Eq.~(\ref{eq:GeCov-as-2PCF}) to provide an upper-bound 
for the generalized covariance used in the Section~\ref{section:finite-temperature}.

Let us start with the first term in Eq.~(\ref{eq:GeCov-as-2PCF}). By using Lemma \ref{lemma:covarianceIS}, we get
\begin{equation}
|\cov_T(\ell_B;X_s,YT^{\ell_B}T_s)| \le \norm{X_s \ket{1_L}} \norm{YT^\ell_B T_s\ket{1_R}} \e^{-\ell_B/\xi}\, .
\end{equation}
The coefficients can be bounded by
\begin{eqnarray}
 \norm{X_s \ket{1_L}}^2=
\bra{1_L}X_sX_s^\dagger \ket{1_L}\le \norminf{h_L}^2 \bra{1_L}T_sT_s^\dagger \ket{1_L}
  \end{eqnarray}
\begin{eqnarray}
  \norm{Y T^{\ell_B} T_s \ket{1_R}}^2&\le& \norminf{O}^2 \norm{T^{\ell_B+1}T_s\ket{1_R}}^2\nonumber \\
  &\le&  \norminf{O}^2 \left(|\av{T_s}_T|^2+\bra{1_R}T_s^\dagger T_s \ket{1_R}\e^{-2(\ell+1)/\xi}\right)\, ,
\end{eqnarray}
where we have used the structure of $X_s$ and $Y$ and 
the fact that $\Tr(O\rho)\le \norminf{O}\Tr(\rho)$ for any $\rho\ge 0$.
It is also necessary to upper bound the inverse of the partition function.
With that aim, let us write the partition function as
\begin{equation}
Z_s=\av{T_s}_T^2+\bra{1_L}T_s\ \left(T-\ket{1_R}\bra{1_L}\right)^{2\ell_B+1}\,T_s \ket{1_R}\, 
\end{equation}
where we have used that $(T -\ket{1_R}\bra{1_L})^\ell=T^\ell -\ket{1_R}\bra{1_L}$.
By proceeding similarly as in the proof of Lemma~\ref{lemma:covarianceIS}, 
the partition function $Z_s$ can be lower bounded by
\begin{equation}
|Z_s|\ge|\av{T_s}_T|^2-\norm{T_s^\dagger \ket{1_L}}\norm{T_s \ket{1_R}}\e^{-(2\ell_B+1)/\xi}\, ,
\end{equation}
and its inverse is upper bounded by
\begin{eqnarray}
|Z_s|^{-1}&\le&|\av{T_s}_T|^{-2}\left(1 + \frac{\norm{T_s^\dagger \ket{1_L}}\norm{T_s \ket{1_R}}}{|\av{T_s}_T|^4}\e^{-(2\ell_B+1)/\xi}\right.\nonumber\\
 &+& \left. \frac{\norm{T_s^\dagger \ket{1_L}}^2\norm{T_s \ket{1_R}}^2}{|\av{T_s}_T|^{8}}\e^{-(4\ell_B+2)/\xi}\right)\, .
\end{eqnarray}
By putting the previous bounds together we get
\begin{equation}
\left|\frac{\cov_T(\ell_B;X_s,YT^{\ell_B}T_s)}{Z_s}\right| \le
\norminf{O} \norminf{h_L}(1+\sigma_L(s))
 \e^{-\ell_B/\xi} + \mathcal{O}(\e^{-2\ell_B / \xi}) ,
\end{equation}
where $\sigma_{L}(s)=(\bra{1_{L}}T_sT_s^\dagger\ket{1_{L}}-|\av{T_s}_T|^2)^{1/2}/|\av{T_s}_T|$
is the relative standard deviation of $T_s$ on the left dominant eigenstate $\ket{1_L}$
and we have omitted for simplicity the second order terms in $\e^{-\ell_B/\xi}$.

In a similar way, the second term in Eq.~(\ref{eq:GeCov-as-2PCF}) can be bounded by
\begin{eqnarray}
\left|\frac{\av{X_s}_T\av{T_s}_T}{Z_s}\frac{\cov_T(\ell_B;T_s,YT^{\ell_B}T_s)}{Z_s}\right|&\le&
\norminf{h_L}\norminf{O}(1+\sigma_L(s))\e^{-\ell_B/\xi} \nonumber\\
&+& \mathcal{O}(\e^{-2\ell_B / \xi}) ,
\end{eqnarray}
where we have used that $ |\av{X_s}_T/\av{T_s}_T|\le \norminf{h_L}$.

The rest of terms in Eq.~(\ref{eq:GeCov-as-2PCF}) can be analogously bounded. Note that they will only contribute to the second order. Putting everything together in Eq.~(\ref{eq:trivial-bound}), the physical distinguishability on the region $A$ between the truncated and untruncated thermal states is upper-bounded by
\begin{equation}
\hspace{-.5cm}
    \Tr \left[O\,\omega(H_{AB})\right] 
	 - \Tr\left[O\,\omega(H)\right]
    \le 2 \beta c \norminf{h_L}\norminf{O} e^{-\ell_B /\xi} +\mathcal{O}(\e^{-2\ell_B / \xi}), 
\end{equation}
where $c=1+\int_0^1 \dd s \sigma_L(s)$ is a constant of order one, depending on the model.

\section{Solving quantum Ising model} \label{appa}

In this appendix we find the states (\ref{eq:rhoA}) and (\ref{eq:rhoA'}) using formalism of covariance matrices.


\subsection*{\textbf{Jordan-Wigner transformation}}
Let us first apply  the \textit{Jordan-Wigner transformation}, $\sigma_x^i \otimes \sigma_x^{i+1}=(a_i^\dagger-a_i)(a_{i+1}+a_{i+1}^\dagger)$ and $\sigma_z^i=a_i a_i^\dagger-a_i^\dagger a_i$, to the Hamiltonian (\ref{eq:numerical-example}). We obtain,
\begin{equation}\label{eq02}
H_n=\sum_{i,j=1}^N A_{ij} a_i a_j^\dagger+\frac{1}{2}\sum_{i,j=1}^N B_{ij}(a_i^\dagger a_j^\dagger-a_i a_j) \text{,}
\end{equation}
with $A_{ij}=h \delta_{i,j} + \frac{1}{2}(\delta_{i+1,j}+\delta_{i,j+1})$ and $B_{ij}= \frac{1}{2}(\delta_{i+1,j}-\delta_{i,j+1})$ and where $a_i$ and $a_i^\dagger$ denote annihilation and creation operators, respectively. From this form of the Hamiltonian, we notice it is quadratic, and thus the thermal state (and their marginal states) are gaussian states. Therefore we can deal with them using the covariance matrix formalism.

\subsection*{\textbf{The correlation matrix}}
In this formalism, we define the global correlation matrix, $\Gamma$, as

\begin{equation}\label{eq05}
\hspace{-2.2cm}\Gamma (X) = \langle X X^\dagger \rangle =
\left[\begin{array}{ccc}
\parallel\langle a_i a_j ^\dagger\rangle \parallel_{N \times N} &  & \parallel\langle a_i a_j\rangle\parallel_{N \times N} \\
 & & \\
\parallel\langle a_i^\dagger a_j ^\dagger\rangle\parallel_{N \times N} & & \parallel\langle a_i^\dagger a_j\rangle\parallel_{N \times N}
\end{array} \right] 
\quad \text{with} \quad
X = \left[ \begin{array}{c}
a_1 \\ 
\vdots \\
a_{N} \\ 
a_1^{\dagger} \\
\vdots \\
a_{N}^\dagger
\end{array} \right] \text{,}
\end{equation}
where $\parallel ... \parallel_{N \times N}$ refers to a $N \times N$ matrix. Given $\Gamma$, we can obtain the correlation matrix corresponding to a reduced state by just selecting the corresponding matrix elements of $\Gamma$. For example, the correlation matrix of the fermions $k,k+1$ is given by,
\begin{equation}\label{eq03}
\Gamma_{k,k+1}=\left( \begin{array}{cccc}
\langle a_k a_k^\dagger \rangle & \langle a_k a_{k+1}^\dagger \rangle & \langle a_k a_k\rangle & \langle a_k a_{k+1} \rangle \\
\langle a_{k+1} a_k^\dagger \rangle & \langle a_{k+1} a_{k+1}^\dagger \rangle & \langle a_{k+1} a_k \rangle & \langle a_{k+1} a_{k+1} \rangle \\
\langle a_k^\dagger a_k^\dagger \rangle & \langle a_k^\dagger a_{k+1}^\dagger \rangle & \langle a_k^\dagger a_k\rangle & \langle a_k^\dagger a_{k+1} \rangle \\
\langle a_{k+1}^\dagger a_k^\dagger \rangle & \langle a_{k+1}^\dagger a_{k+1}^\dagger \rangle & \langle a_{k+1}^\dagger a_k \rangle & \langle a_{k+1}^\dagger a_{k+1} \rangle
\end{array} \right)\text{.}
\end{equation}

Since the Jordan Wigner transformation is local, in the sense that it maps the $k$th fermion to the $k$th spin in the chain, this correlation matrix also corresponds to the two-spin subsystem at sites $k$ and $k+1$. This subsystem is precisely the region of interest $A$ in section \ref{localityIsing}, and thus (\ref{eq03}) corresponds to the correlation matrix of $\rho_A$ in (\ref{eq:rhoA}).

Given the reduced correlation matrix, the explicit form of $\rho_A$ can be easily obtained. As the reduced state of a thermal state is gaussian, there is a one to one connection between (\ref{eq03}) and $\rho_A$. Indeed, for any gaussian state, with
\begin{equation}
\rho = \frac{e^{-X^\dagger M X}}{Tr[e^{-X^\dagger M X}]}\text{ with $M$ a coefficient matrix,}
\end{equation}
it is straightforward to prove that, provided that $M$ is diagonalizable, 
\begin{equation}
\Gamma (X) = \frac{1}{(\mathbb{1}+e^{-2M})}
\end{equation}
or, equivalently, that
\begin{equation}\label{Mrho}
M=-\frac{1}{2} \log(\Gamma(X)^{-1}-\mathbb{1}) \text{.}
\end{equation}

\subsection*{\textbf{Explicit computation}}

Now we explicitly compute (\ref{eq03}) for a finite and an infinite chain, in order to obtain $\rho'_A$ and $ \rho_A$, respectively, using relation (\ref{Mrho}). \\

\begin{itemize}

\item Finite chain
\hfill \break
\hfill \break
For the case of a finite chain, we need to obtain the correlation matrix (\ref{eq05}) corresponding to the global state. It is then useful to first diagonalize the Hamiltonian (\ref{eq02}) by applying the Bogoliubov transformation 
\begin{equation}\label{eq06}
b_j=\sum_{k=1}^N \frac{1}{2} (\phi_{jk}+\psi_{jk})a_i -\frac{1}{2}(\phi_{jk}-\psi_{jk})a_i^\dagger \text{,}
\end{equation}
where $\phi$ and $\psi$ are real matrices and verify $\sum_{k=1}^N \phi_{jk}^2=\sum_{k=1}^N \psi_{jk}^2=1$. The Hamiltonian can then take the form, 
\begin{equation}
H=\sum_{k=1}^N \xi_k (b_k^\dagger b_k -1/2)\text{,}
\end{equation}
where $\xi_k$ are the fermionic excitation energies and $b_k$ and $b_k^\dagger$ denote annihilation and creation operators, respectively. The excitation energies, $\xi_k$, and the matrices $\phi$ and $\psi$ are obtained by solving the equation
\begin{equation}
(A-B) \phi = \psi D \text{,}
\end{equation}
where $D$ is a diagonal matrix whose entries correspond to the excitation energies, $\xi_k$.

Once the Hamiltonian is diagonalized, it is easy to compute the correlation matrix of a thermal state at inverse temperature $\beta$ in the diagonalized basis, obtaining


\begin{equation}\label{eq05}
\Gamma (Y) =
\left[\begin{array}{ccc}
\frac{1}{1+e^{-\beta D}} &  & 0_{N \times N} \\
 & & \\
0_{N \times N} & & \frac{1}{1+e^{\beta D}}
\end{array} \right] 
\quad \text{with} \quad
Y = \left[ \begin{array}{c}
b_1 \\ 
\vdots \\
b_{N} \\ 
b_1^{\dagger} \\
\vdots \\
b_{N}^\dagger
\end{array} \right] \text{,}
\end{equation}
where the non-zero matrices are diagonal.\\

From that expression we can obtain the correlation matrix in the original basis, $\Gamma (X)$, via 
\begin{equation}
\Gamma (X) = T^\dagger \Gamma (Y) T \text{,}
\end{equation}
where $T$ is the transformation matrix defined by the Bogoliubov transformation (\ref{eq06}). That is, $Y = T X$, with
\begin{equation}
T = \left[  \begin{array}{cc}
\gamma & \mu \\
\mu^* & \gamma^*
\end{array} \right] \text{.}
\end{equation}
and 
\begin{equation}
\gamma = \frac{1}{2} (\phi + \psi) \text{ and } \mu = - \frac{1}{2} (\phi - \psi) \text{.}
\end{equation}
\ \\
\item Infinite chain ($N \rightarrow \infty$)
\hfill \break
\hfill \break
In the case of an infinite chain, (\ref{eq03}) can be obtained relying on the analytical results from \cite{Barouch}. The partial state of a two-spin subsystem  is
\begin{equation}
\rho_2^{n\rightarrow\infty} =\frac{1}{4}\left[ 1+\langle \sigma_z^k \rangle (\sigma_z^k+\sigma_z^{k+1})+\sum_{l=x,y,z}^3 \langle \sigma_l ^k \sigma_l ^{k+1}\rangle \sigma_l ^k \otimes \sigma_l ^{k+1} \right] \text{,}
\end{equation}
where the average $\langle\sigma_z^k\rangle$ and the two-spin correlation functions $\lbrace\langle\sigma_l ^k \sigma_l ^{k+1}\rangle\rbrace_{l=\lbrace x,y,z\rbrace}$ are given by \cite{Barouch}. In order to express the state in the fermionic basis, we can compute the reduced correlation matrix (\ref{eq03}) from this state,
\begin{equation}
\Gamma_{k,k+1} = \left( \begin{array}{cccc}
2(1+\alpha)& -(\beta+\gamma) & 0 & -(\beta-\gamma) \\
-(\beta+\gamma)& 2(1+\alpha) & \beta-\gamma & 0 \\
0 & \beta-\gamma & 2(1-\alpha) & \beta+\gamma \\
-(\beta-\gamma) & 0 & \beta+\gamma & 2(1-\alpha)
\end{array} \right) \text{,}
\end{equation}
with $\alpha= \langle \sigma_z^k\rangle $, $\beta= \langle \sigma_x^k \sigma_x^{k+1} \rangle $ and $\gamma= \langle \sigma_y^k  \sigma_y^{k+1}\rangle $.


\end{itemize}

\addcontentsline{toc}{section}{References}

\section*{References}

\end{document}